%% file: main.tex
\documentclass[letterpaper,reqno,11pt]{article}
\usepackage[margin=1.0in]{geometry}
\usepackage{color,latexsym,amsmath,amssymb}
\usepackage{fancyhdr}
\usepackage{amsthm}
\usepackage[linesnumbered,lined,boxed,commentsnumbered,ruled]{algorithm2e}
\usepackage{dsfont}
\usepackage{graphicx}
\usepackage[hidelinks=true]{hyperref}
\usepackage{setspace}
\usepackage{lmodern}
\usepackage{thmtools}
\usepackage{thm-restate}
\usepackage[numbers]{natbib}
\usepackage{subcaption}
\usepackage[inline]{enumitem}
\usepackage{nicematrix}
\usepackage{cleveref}
\usepackage{booktabs}
\usepackage{multirow}
\usepackage{makecell}
\usepackage{dsfont}

\usepackage{tikz}
\usetikzlibrary{bbox}
\usetikzlibrary{fit}
\usetikzlibrary{arrows.meta}

\usepackage{pgfplots}
\pgfplotsset{compat=1.18}
\usepgfplotslibrary{fillbetween}

\usepackage{listings}
\crefname{subsection}{subsection}{subsections}
\Crefname{subsection}{Subsection}{Subsections}

\allowdisplaybreaks

\newcommand{\RR}{\mathbb{R}}

\newcommand{\ZZ}{\mathbb{Z}}
\newcommand{\QQ}{\mathbb{Q}}
\newcommand{\NN}{\mathbb{N}}
\newcommand{\T}{\mathsf{T}}

\newcommand{\NP}{\mathsf{NP}}
\newcommand{\minimize}{\textrm{minimize}}
\newcommand{\maximize}{\textrm{maximize}}
\newcommand{\subjectto}{\textrm{subject to}}

\newcommand{\OPT}{\mathsf{OPT}}
\newcommand{\LPOPT}{\mathsf{LPOPT}}
\newcommand{\DP}{\mathsf{DP}}

\DeclareMathOperator{\Conv}{\mathsf{conv}}
\DeclareMathOperator{\Prob}{Pr}
\DeclareMathOperator{\EE}{\mathbb{E}}
\DeclareMathOperator{\Bin}{\mathsf{Bin}}
\DeclareMathOperator{\Var}{\mathsf{Var}}

\title{A Better-Than-$3$ Approximation Algorithm for Demand Matching via Knapsack Intersection LP and Contention Resolution}
\author{Michel X.\ Goemans \\ \small MIT \\ \small \tt goemans@math.mit.edu \and Yuchong Pan \\ \small MIT \\ \small \tt yuchong@mit.edu}
\date{}
\newtheorem{theorem}{Theorem}[section]
\newtheorem{lemma}[theorem]{Lemma}

\newtheorem{corollary}[theorem]{Corollary}

\newtheorem{proposition}[theorem]{Proposition}

\theoremstyle{definition} 

\begin{document}

\maketitle

\begin{abstract}
    \input{abstract}
\end{abstract}

\input{intro}
\input{knapsack}
\input{simple}
\input{multi}
\input{concluding}
\input{acknowledgements}

\bibliographystyle{abbrvnat}
\bibliography{refs.bib}

\appendix
\crefalias{section}{appendix}

\input{tight}
\input{pareto}
\input{fptas}

\end{document}

%% file: abstract.tex
The demand matching problem generalizes both the knapsack problem and the $b$-matching problem. In this problem, each edge of a graph has a demand and a weight, and each vertex has a capacity. The goal is to find a maximum weight subset of edges such that, at each vertex, the total demand of the incident selected edges does not exceed the vertex capacity. Parekh [IPCO 2011] proved that, if each edge is individually feasible, the natural LP relaxation for demand matching has integrality gap at most $3$, yielding a $3$-approximation algorithm. This bound is tight for the natural LP relaxation, matching the lower bound of Shepherd and Vetta [Math.~Oper.~Res.~2007].

We present a randomized $(3/2 + \sqrt{2} + \varepsilon) \approx (2.914 + \varepsilon)$-approximation algorithm for the demand matching problem for every $\varepsilon > 0$, giving the first approximation ratio strictly better than $3$. For bipartite graphs, we obtain a randomized $(2 + \varepsilon)$-approximation algorithm for every $\varepsilon > 0$. Both algorithms run in time polynomial in $1/\varepsilon$ and the input length. Our algorithms use a strengthened LP relaxation based on intersecting the integral knapsack polytopes associated with the vertices, together with a multiple-choice generalization. As a key ingredient, we prove the existence of a $(q, 1/(1+q))$-balanced contention resolution scheme for the integral knapsack polytope for every $q \in [0, 1]$, which may be of independent interest. The balance guarantee $1/(1+q)$ is tight in the worst case over all knapsack instances.

%% file: intro.tex
\section{Introduction} \label{sec:intro}

We consider the \textsc{Demand Matching} problem introduced by \citet{shepherd2007demand}. Let $G = (V, E)$ be a (multi)graph, where multiple edges between the same pair of vertices are allowed. Let $b : V \to \ZZ_+$ denote \emph{vertex capacities}. Let $d : E \to \NN$ and $w : E \to \ZZ_+$ denote \emph{edge demands} and \emph{edge weights}, respectively.\footnote{We use $\NN$ and $\ZZ_+$ to denote the set of positive integers and the set of nonnegative integers, respectively.} We say that a subset $M \subseteq E$ is a \emph{demand matching} if
\begin{align*}
    d(M \cap \delta(v)) \leq b(v) && \forall v \in V,
\end{align*}
where $\delta(v)$ denotes the set of edges of $G$ incident to $v$.\footnote{Given a function $\varphi : \Sigma \to \RR$ on a ground set $\Sigma$ and a subset $S \subseteq \Sigma$, we define $\varphi(S) := \sum_{e \in S} \varphi(e)$.} The objective of the \textsc{Demand Matching} problem is to find a demand matching $M$ such that $w(M)$ is maximized. Without loss of generality, we assume that $G$ does not contain self-loops.

The \textsc{Demand Matching} problem generalizes two classic combinatorial optimization problems. In the unit-demand case, where $d(e) = 1$ for all $e \in E$, it reduces to the \textsc{Maximum Weight $b$-Matching} problem, for which several polynomial-time algorithms are known \cite{marsh1979matching,padberg1982odd,gabow1983efficient,anstee1987polynomial,gerards1995matching}. On the other hand, when $G$ is a star and every leaf capacity is at least the demand of its incident edge (or, equivalently, when $G$ is a two-vertex multigraph), the problem reduces to the \textsc{$0$-$1$ Knapsack} problem, which is $\NP$-hard. Thus, the \textsc{Demand Matching} problem is $\NP$-hard.

The \emph{natural LP relaxation} of the \textsc{Demand Matching} problem is obtained by replacing each $\{ 0, 1 \}$-valued variable with a fractional variable in $[0, 1]$:
\begin{alignat}{4}
    \maximize \qquad && \sum_{e \in E} x_e w(e) & \tag{Natural-LP} \label{eq:lp} \\
    \subjectto \qquad && \sum_{e \in \delta(v)} x_e d(e) & \leq b(v) && \qquad \forall v \in V, \notag \\
    && x_e &\in [0, 1] && \qquad \forall e \in E. \notag
\end{alignat}
This relaxation has been extensively studied under the \emph{no-bottleneck assumption} which states that
\begin{align*}
    d(e) \leq \min \{ b(u), b(v) \} && \forall e = uv \in E.
\end{align*}
This assumption can be made, without loss of generality, without changing the integral optimum, as an edge $e = uv \in E$ violating $d(e) \leq \min\{ b(u), b(v) \}$ cannot belong to any feasible demand matching and can therefore be deleted. Deleting these edges may, however, change the integrality gap of an LP relaxation. Indeed, without this assumption, the integrality gap of the natural LP relaxation is unbounded. In \Cref{tab:dm}, we summarize known lower and upper bounds on the integrality gap of the natural LP relaxation under the no-bottleneck assumption.\footnote{In a preliminary version, \citet{shepherd2007demand} proved an upper bound of $3.3125$ for general multigraphs, and an upper bound of $2.8125$ for bipartite multigraphs.} On the hardness side, \citet{shepherd2007demand} showed that the \textsc{Demand Matching} problem is $\mathsf{MAXSNP}$-hard even in the cardinality case (i.e., $w(e) = 1$ for all $e \in E$).

\begin{table}[ht]
    \centering
    \begin{tabular}{llll}
        \hline
        \textbf{Graph class} & \textbf{Lower bound} & \textbf{Upper bound} & \textbf{Reference} \\
        \hline
        General multigraphs
            & $3$
            & $3.264$
            & \cite{shepherd2007demand} \\
        General multigraphs
            &
            & $3$
            & \cite{parekh2011iterative} \\
        Bipartite multigraphs
            & $2.5$
            & $2.764$
            & \cite{shepherd2007demand} \\
        Bipartite simple graphs
            & $2.699$
            & $2.709$
            & \cite{singh2012nearly} \\
        \hline
    \end{tabular}
    \caption{Known lower and upper bounds on the integrality gap of the natural LP relaxation of the \textsc{Demand Matching} problem under the no-bottleneck assumption.}
    \label{tab:dm}
\end{table}

The main contribution of this paper is the following theorem.

\begin{theorem} \label{thm:main}
    For every $\varepsilon > 0$,
    \begin{itemize}[itemsep=0pt]
        \item there is a fully polynomial-time randomized $(3/2 + \sqrt{2} + \varepsilon) \approx (2.914 + \varepsilon)$-approximation algorithm for the \textsc{Demand Matching} problem;\footnote{We say that an algorithm is \emph{fully polynomial-time} with respect to a parameter $\varepsilon > 0$ if its running time is polynomial in the input length and $1/\varepsilon$.}\footnote{For $\alpha \geq 1$, we say that a randomized algorithm is a \emph{randomized $\alpha$-approximation algorithm} for a maximization problem with optimum $\OPT$ if, given an input, the expected objective value of its output is at least $\OPT/\alpha$.}
        \item there is a fully polynomial-time randomized $(2 + \varepsilon)$-approximation algorithm for the special case of the \textsc{Demand Matching} problem where the input graph is bipartite.
    \end{itemize}
\end{theorem}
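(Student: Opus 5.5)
The plan follows the abstract: solve a strengthened LP, then round it with contention resolution schemes.

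\textbf{LP.} For each vertex $v$ let $P_v\subseteq[0,1]^{\delta(v)}$ be the integral knapsack polytope $\Conv\{\chi^S : S\subseteq\delta(v),\ d(S)\le b(v)\}$. The relaxation is $\max\{w^\T x : x|_{\delta(v)}\in P_v\ \forall v\}$. Separation over $P_v$ is knapsack-hard, so I will not solve this LP exactly. Instead I will solve it approximately, to within a factor $1-\varepsilon$, in time polynomial in $1/\varepsilon$. The tool is an FPTAS for knapsack, used either through the ellipsoid method as an approximate separation/optimization oracle or through a rounding-of-demands reformulation. In the reformulation, demands are scaled and split into large and small items, and a compact "multiple-choice" extended formulation is written. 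This is the multiple-choice generalization mentioned in the abstract, and it yields a polytope sandwiched between $(1-\varepsilon)P_v$ and $P_v$. The output is a point $x$ with $w^\T x\ge(1-\varepsilon)\OPT$ whose restriction to each star $\delta(v)$ is (approximately) in $P_v$.

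\textbf{Balanced CRS.} The key ingredient is the $(q,1/(1+q))$-balanced CRS for the knapsack polytope. Given $y\in P_v$, sample each $e$ independently with probability $qy_e$. The scheme outputs a feasible subset in which each $e$ survives, conditioned on being sampled, with probability at least $1/(1+q)$. I expect this to be the main obstacle.

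To prove it, I would write $y$ as a convex combination of feasible sets and use LP duality/minimax. It then suffices to show the following: for every nonnegative weight vector $c$,
\[
\EE\bigl[\max\{c(T) : T\subseteq R,\ T \text{ feasible}\}\bigr]\ \ge\ \tfrac{q}{1+q}\, c^\T y ,
\]
where $R$ is the random sampled set. For this I would try a charging argument. Decompose $y=\sum_S\lambda_S\chi^S$ and compare $R$ against a random $S$ chosen with probability $\lambda_S$. Greedily keep elements of $R\cap S$ and resolve conflicts using the knapsack structure. The bound $1/(1+q)$ should be tight on the case of one big item plus many tiny ones.

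\textbf{Combining.} Next I would apply the vertex CRSs at the two endpoints of each edge independently. Also sample each edge with probability $qx_e$. Edge $e=uv$ is kept if it is selected by both the scheme at $u$ and the scheme at $v$. The schemes are not monotone in general, so I would use the standard FKG/monotonicity argument; where monotonicity fails, the exact balance form needs correlation instead.

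In the bipartite case the sides are independent given $R$. Each endpoint keeps $e$ with conditional probability $\ge 1/(1+q)$, and I expect a refined argument (the two sides' schemes are independent) to give a per-edge retention of $q/(1+q)^2$. That is maximized at $q=1$, giving $1/4$, which is not $1/2$. So a better analysis is needed.

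The better analysis is an orientation-based scheme. Each edge is assigned to one endpoint using the half-integral structure of its fractional value. Big edges, with $d(e)>b/2$ at some endpoint, are handled by matching-like constraints: by the no-bottleneck assumption at most one such edge fits per vertex, and bipartite matching polytopes are integral. Small edges are handled by the knapsack CRS at the one constraining endpoint. Optimizing $q$ then gives $2$ in the bipartite case. For general graphs, odd cycles cost an extra factor, and balancing $q$ against the loss gives an expression whose optimum is $3/2+\sqrt2$. I would derive $q$ by setting the derivative of a ratio such as $(1+q)^2/(q(\dots))$ to zero; the value $\sqrt2-1$ is a natural candidate.

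Finally, the FPTAS losses and the approximate LP contribute the $(1+\varepsilon)$ factor. The running time is polynomial in $1/\varepsilon$ throughout.
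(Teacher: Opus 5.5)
\textbf{The main gap is the combining step.} This is where the ratios $2$ and $3/2+\sqrt{2}$ come from, and you never arrive at a working algorithm there.

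Running a CR scheme at both endpoints only guarantees $q/(1+q)^2\le 1/4$, as you note. The ``orientation-based'' replacement is not a concrete scheme:
\begin{itemize}
\item a knapsack-intersection LP solution has no half-integral structure to exploit;
\item applying the CR scheme only at one ``constraining'' endpoint gives no feasibility guarantee at the other endpoint;
\item the values $2$, $3/2+\sqrt{2}$ and $q=\sqrt{2}-1$ are reverse-engineered rather than derived.
\end{itemize}

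The missing idea is an asymmetric source--sink rounding that uses the \emph{integral} knapsack polytope on one side. Since $x|_{\delta(u)}\in P_u$, a source $u$ can sample a set $S_u\subseteq\delta(u)$ that is already feasible at $u$ and has exact marginals $x_e$. So nothing is lost at sources, and contention resolution is needed only at sinks.

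In a simple bipartite graph with sources $U$ and sinks $W$, the edges proposed to a sink $v$ come from distinct, independently sampling sources. They therefore form exactly the sample $R(x|_{\delta(v)})$. Taking a maximum-weight feasible subset at $v$ and applying the CR theorem with $q=1$ retains expected weight at least $\frac12\sum_{e\in\delta(v)}w(e)x_e$. Summing over sinks gives the factor $2$.

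In general graphs, label each vertex a source independently with probability $q$. Conditioned on $v$ being a sink, the received set is $R(qx|_{\delta(v)})$, so $\EE[w(M)]\ge\sum_{v}(1-q)\frac{q}{1+q}\sum_{e\in\delta(v)}w(e)x_e=\frac{2q(1-q)}{1+q}\,w^\T x$. The ratio $(1+q)/(2q(1-q))$ is minimized at $q=\sqrt{2}-1$, with value $3/2+\sqrt{2}$. The extra loss in general graphs comes from the random labeling, not from odd cycles.

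\textbf{Two other ingredients are also missing.}

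First, your sketch of the CR inequality does not contain the needed idea. Since $R$ and the decomposition sample $S$ are independent, keeping $R\cap S$ gives only $q\sum_e c_e y_e^2$. This is far below $\frac{q}{1+q}c^\T y$ when the $y_e$ are small, and ``resolving conflicts'' is left unspecified. The paper proceeds as follows:
\begin{itemize}
\item It defines the deletion sensitivity $\Delta_e(S,\ell)=\max_{k\le\ell}\bigl(r(S,k)-r(S\setminus\{e\},k)\bigr)$.
\item It shows $\Delta_e$ does not decrease when other items are removed, and that $\sum_{e\in S}\Delta_e(S,\ell)\le r(S,\ell)$.
\item It inserts $S$ into $R$ one item at a time, giving $w(S)\le r(R,B)+\sum_{e\in S}\Delta_e(R\cup\{e\},B)$.
\item It averages over $S$, using that $\Delta_e(R\cup\{e\},B)$ is independent of the event $e\in R$. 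This bounds the second term by $\frac1q\EE\bigl[\sum_{e\in R}\Delta_e(R,B)\bigr]\le\frac1q\EE[r(R,B)]$.
\end{itemize}

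Second, Demand Matching is posed on multigraphs. Under source--sink rounding, parallel edges between $u$ and $v$ are sampled jointly inside $S_u$, so a sink does not receive an independent sample. Handling this, not demand rounding, is what the paper's multiple-choice generalization is for:
\begin{itemize}
\item parallel edges are bundled into auxiliary edges;
\item each bundle family is replaced by a polynomial-size $\xi$-Pareto family;
\item the multiple-choice knapsack-intersection LP is solved by an FPTAS via approximate separation of a configuration-LP dual;
\item the CR guarantee is extended to sampling at most one item per part, using the modified insertion $T\vee e=(T\setminus\pi(e))\cup\{e\}$.
\end{itemize}
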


To the best of our knowledge, this gives the first approximation ratio strictly better than $3$ for the \textsc{Demand Matching} problem. Our approximation guarantees in \Cref{thm:main} are relative to a strengthened LP relaxation rather than the natural LP relaxation.

In the following subsections, we provide a brief overview of our algorithms and key ingredients, and introduce the strengthened LP relaxations used by our algorithms.

\subsection{Overview of the Algorithms}

The core idea for proving \Cref{thm:main} is best illustrated in the case of simple graphs, where there exists at most one edge between any pair of vertices. The main advantage of simple graphs is that no two edges incident to the same vertex are associated with the same neighbor, which gives the independence needed in our rounding procedure. In a multigraph, parallel edges share both endpoints and may therefore be correlated in the rounding procedure, so this independence property no longer holds. To handle these correlations, we develop additional machinery in \cref{sec:multi}, including bundling parallel edges, approximating bundle families, and extending the analysis to multiple-choice sampling.

\paragraph{Knapsack Intersection LP.} The motivation is to strengthen the natural LP relaxation of the \textsc{Demand Matching} problem as follows. In \eqref{eq:lp}, the constraint for each $v \in V$,
$$ \sum_{e \in \delta(v)} x_e d(e) \leq b(v), $$
together with $x \in [0, 1]^E$, defines the \emph{fractional knapsack polytope} on $\delta(v)$ with item sizes $d$ and capacity $b(v)$. We replace this constraint by the requirement that $x$ belongs to the corresponding \emph{integral knapsack polytope}, i.e.,
$$ x \in K(v) := \Conv \left\{ \chi^S : S \subseteq E, d(S \cap \delta(v)) \leq b(v) \right\}. $$
This gives the following LP relaxation of the \textsc{Demand Matching} problem, which we call the \emph{knapsack intersection LP}:
\begin{alignat}{4}
    \maximize \qquad && \sum_{e \in E} x_e w(e) & \tag{Knapsack-LP} \label{eq:knapsack-lp} \\
    \subjectto \qquad && x \in K(v) && \qquad \forall v \in V. \notag
\end{alignat}

The knapsack intersection LP is precisely the first level of the knapsack intersection hierarchy introduced by \citet*{jozefiak2023knapsack} for packing integer programs. More generally, at level $t$, their hierarchy strengthens the natural LP relaxation of a packing integer program by imposing the integer hull of every subsystem consisting of $t$ packing constraints. In other words, their hierarchy captures the maximum possible strength of cuts obtained by considering $t$ packing constraints at a time, and interpolates between the natural LP relaxation and the full integer hull.

Exact linear optimization over \eqref{eq:knapsack-lp} is $\NP$-hard. Indeed, when $G$ is a star and each leaf has capacity at least the demand of its incident edge, the feasibility region of \eqref{eq:knapsack-lp} coincides with the integral knapsack polytope at the center. Optimizing over this polytope is equivalent to solving the \textsc{$0$-$1$ Knapsack} problem, which is $\NP$-hard. Nevertheless, building on a technique of \citet{pritchard2010lp}, \citet{jozefiak2023knapsack} showed that linear optimization over \eqref{eq:knapsack-lp} (and more generally, over every fixed level of their hierarchy) has a polynomial-time approximation scheme (PTAS). In fact, as we show in \Cref{lem:fptas}, a multiple-choice generalization of \eqref{eq:knapsack-lp} admits a fully polynomial-time approximation scheme (FPTAS). This formulation specializes to \eqref{eq:knapsack-lp} on simple graphs.

\paragraph{Contention Resolution Scheme for the Integral Knapsack Polytope.} A key ingredient in the proof of \Cref{thm:main} for simple graphs is the existence of a $(q, 1/(1+q))$-balanced contention resolution scheme for the integral knapsack polytope, for every $q \in [0, 1]$. We believe that this result may be of independent interest.

We first recall the definition of a contention resolution scheme introduced by \citet*{chekuri2014submodular}. Let $N$ be a finite ground set. Let $\mathcal I \subseteq 2^N$ be a nonempty downward-closed family of \emph{feasible} subsets of $N$.\footnote{We say that a family $\mathcal I \subseteq 2^N$ on a finite ground set $N$ is \emph{downward-closed} if, for all $A \subseteq B \subseteq N$, $B \in \mathcal I$ implies $A \in \mathcal I$.} Let $P_{\mathcal I}$ be a \emph{convex relaxation} of $\mathcal I$, i.e., $P_{\mathcal I}$ is convex and
$$ \Conv\left\{ \chi^S : S \in \mathcal I \right\} \subseteq P_{\mathcal I} \subseteq [0, 1]^N. $$
Given $x \in P_{\mathcal I}$, a natural strategy for rounding $x$ into an integral solution is to independently select each element $e \in N$ with probability $x_e$. The resulting set, however, need not be feasible. A contention resolution scheme enforces feasibility by removing some of the selected elements while retaining each element with sufficiently large probability.

Formally, given $y \in [0, 1]^N$, let $R(y) \subseteq N$ denote the random subset obtained by independently including each element $e \in N$ with probability $y_e$. Given $q, p \in [0, 1]$, a \emph{$(q, p)$-balanced contention resolution (CR) scheme} for $P_{\mathcal I}$ is a randomized scheme that, given any $x \in P_{\mathcal I}$, outputs a feasible subset $I \subseteq R(qx)$ such that for all $e \in N$ with $qx_e > 0$,
$$ \Pr[e \in I \mid e \in R(qx)] \geq p. $$

\citet[Theorem 4.3]{chekuri2014submodular} proved the following equivalent characterization of a $(q, p)$-balanced contention resolution scheme: there exists a $(q, p)$-balanced contention resolution scheme for $P_{\mathcal I}$ if and only if, for all $x \in P_{\mathcal I}$ and $w \in \RR_+^N$,
$$ \EE[\max\{ w(S) : S \subseteq R(qx), S \in \mathcal I \}] \geq pq w^{\T} x. $$
In other words, after sampling the random subset $R(qx)$, one can extract a feasible subset whose expected weight is at least a $p$-fraction of the expected weight of $R(qx)$.

Given item sizes $d_1, \ldots, d_n \in \NN$ and a knapsack capacity $B \in \ZZ_+$, the \emph{knapsack independence system} is defined to be\footnote{Here, we define $d(S) := \sum_{e \in S} d_e$ for all $S \subseteq [n]$.}
$$ \mathcal K := \{ S \subseteq [n] : d(S) \leq B \}, $$
and the associated \emph{fractional} and \emph{integral knapsack polytopes} are defined to be
\begin{gather*}
    P_{\mathcal K}^{\mathsf{frac}} := \left\{ x \in [0, 1]^n : \sum_{e \in [n]} d_e x_e \leq B \right\}, \qquad P_{\mathcal K}^{\mathsf{int}} := \Conv\left\{ \chi^S : S \in \mathcal K \right\},
\end{gather*}
respectively. For the fractional knapsack polytope after discarding oversized items, \citet{chekuri2014submodular} gave a $(q, 1 - 2q)$-balanced contention resolution scheme for all $q \in [0, 1/2]$. In this paper, we instead work with the integral knapsack polytope, which is the exact convex hull of feasible knapsack solutions. Since it captures all constraints valid for integral knapsack solutions, one may hope to preserve a larger fraction of the sampled weight during contention resolution. We show that this is indeed possible, proving the following improved guarantee.

\begin{theorem} \label{thm:knapsack}
    For all $q \in [0, 1]$, there exists a $(q, 1/(1+q))$-balanced contention resolution scheme for the integral knapsack polytope.
\end{theorem}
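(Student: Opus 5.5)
We construct the scheme explicitly, with the characterization of \citet[Theorem 4.3]{chekuri2014submodular} as a fallback. Fix $x \in P_{\mathcal K}^{\mathsf{int}}$ and write it as a convex combination $x = \sum_{T \in \mathcal K} \lambda_T \chi^T$ of feasible sets. The one structural fact about the integral knapsack polytope we use is this: for every $T$ in the support, any subset of $T$ is feasible. So any rule of the form ``sample $R = R(qx)$, choose some $T$ in the support (possibly depending on $R$), output $I = R \cap T$'' is a valid CR scheme. Choosing $T \sim \lambda$ independently of $R$ keeps each sampled $e$ only with probability $x_e$, which is far too weak. The plan is therefore to let $T$ depend on $R$ through a \emph{tilted} distribution that favours sets covering many sampled elements.

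\textbf{Scheme.} Given $R$, choose $T$ with probability
\[
\mu_R(T) = \frac{\lambda_T\, \beta^{|R \cap T|}}{\sum_{T'} \lambda_{T'}\, \beta^{|R\cap T'|}},
\]
for a parameter $\beta \geq 1$ to be tuned, possibly depending on $q$ and possibly replaced by a limit (for example, $\beta \to \infty$, which means ``choose $T$ maximizing $|R\cap T|$, with ties broken by $\lambda$'').

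\textbf{Analysis.} For a fixed $e$, condition on $e \in R$ and write $R = R' \cup \{e\}$, where $R' = R(qx)$ restricted to $N \setminus \{e\}$. Set $A = \sum_{T \ni e} \lambda_T \beta^{|R'\cap T|}$ and $B = \sum_{T \not\ni e} \lambda_T \beta^{|R'\cap T|}$. Then
\[
\Pr[e \in I \mid e\in R] = \EE_{R'}\!\left[\frac{\beta A}{\beta A + B}\right].
\]
The steps, in order, are:
\begin{enumerate*}[label=(\roman*)]
\item compute the expectations $\EE[\beta^{|R'\cap T|}] = \prod_{f \in T\setminus e}(1 + (\beta-1) q x_f)$;
\item apply Jensen's inequality (the map $(A,B)\mapsto \beta A/(\beta A+B)$ is concave in $B$) to replace $B$ by its expectation;
\item use the polytope constraint $\sum_{T\not\ni e}\lambda_T = 1 - x_e$ and the feasibility of each $T$ to control how much the tilt can help the sets avoiding $e$ relative to those containing $e$.
\end{enumerate*}
The target is to show that the worst case is governed by a ``competition'' between $e$ and a mass of other elements that can only be chosen instead of $e$, with the ratio bottoming out at $1/(1+q)$.

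\textbf{Fallback.} If the explicit scheme resists analysis, switch to the weight characterization. It suffices to show $\EE[\max\{w(S) : S \subseteq R, S \in \mathcal K\}] \geq \frac{q}{1+q}\, w^{\T}x$ for every $w \geq 0$. Lower-bound the left-hand side by $\EE[\max_{T \in \operatorname{supp}\lambda} w(R\cap T)]$ and then by a mixture of a few explicit strategies: take the $T$ maximizing $w(R\cap T)$, and compare it, via linearity and a decomposition $w^{\T}x = \sum_T \lambda_T w(T)$, against the contribution of $R\cap T$ for $T \sim \lambda$ and of single heavy elements. One should also check on small examples, such as one item of size $B$ with $x = 1-t$ against a set of small items with $x = t$, that $1/(1+q)$ is exactly what these strategies give. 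Finding the tight such example also tells us which inequality in the analysis must be tight.

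\textbf{Main obstacle.} The hard step is step (iii), or its analogue in the fallback: controlling the correlation between $R'$ and the tilted choice of $T$. The Jensen step loses information when $x_e$ is small and the competing sets are numerous. My first attempt would be to prove the bound for the extreme configurations, in which every support set either contains $e$ or is disjoint from a fixed competitor set. I would then reduce general $\lambda$ to these by a convexity or averaging argument over the support of $\lambda$.
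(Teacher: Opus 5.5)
There is a genuine gap, and it lies in the basic design, not only in step (iii). Both your tilted scheme and your fallback keep only elements of $R$ that lie in a \emph{single} set $T$ from a fixed decomposition $\lambda$. (The fallback explicitly lower-bounds $\EE[r(R,B)]$ by $\EE[\max_{T\in\operatorname{supp}\lambda} w(R\cap T)]$.) This restriction cannot give $1/(1+q)$.

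Take $B=k$ and $mk$ items with unit sizes and weights, let $x_e=1/m$, and let $\lambda$ be uniform over $m$ disjoint $k$-sets $T_1,\dots,T_m$.
\begin{itemize}
\item Since $\Pr[e\in R]=q/m$ for every $e$, the smallest retention probability of your scheme is at most the average retention probability, which equals $\EE[|I|]/(qk)$.
\item We have $|I|\le\max_j Y_j$, where $Y_j:=|R\cap T_j|\sim\Bin(k,q/m)$ are independent. With $\mu:=qk/m$, the bound $\max_j Y_j\le\mu+\sum_j|Y_j-\mu|$ gives $\EE[|I|]\le\mu+m\sqrt{\mu}$.
\item Hence the guarantee is at most $1/m+\sqrt{m/(qk)}\to 1/m$ as $k\to\infty$. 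This holds for every $\beta$, including $\beta\to\infty$.
\end{itemize}
With $m=3$ this is below $1/2\le 1/(1+q)$. The same estimate refutes the fallback's intermediate bound. Yet $r(R,B)=\min\{|R|,k\}$ itself is large, because sampled items from different blocks can be combined.

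So the retained set must be allowed to mix support sets. Alternatively, you would have to specify a special decomposition and prove that it suffices, which your plan does not do. The instance you propose to test on is exactly the tight instance of \Cref{prop:tight}. There the optimum happens to lie inside one support set, so it does not reveal the problem.

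Beyond that, step (iii) is left open. Step (ii) is also not valid as stated. $A$ and $B$ both depend on $R'$, so you cannot replace $B$ by $\EE[B]$ while keeping $A$ random. Also, the map is convex, not concave, in $B$.

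The missing idea, which is the paper's route, works directly with $r(R,B)=\max\{w(T): T\subseteq R,\ d(T)\le B\}$ via the characterization of \citet{chekuri2014submodular}. It never needs a single support set to contain the retained elements.
\begin{itemize}
\item Define the deletion sensitivity $\Delta_e(S,\ell):=\max_{k\le\ell}\bigl(r(S,k)-r(S\setminus\{e\},k)\bigr)$. Taking the maximum over capacities makes it monotone: $\Delta_e(S,\ell)\le\Delta_e(T,\ell)$ for $e\in T\subseteq S$. An induction gives $\sum_{e\in S}\Delta_e(S,\ell)\le r(S,\ell)$.
\item Insert the elements of a feasible $S$ into $R$ one at a time, and charge each increment to $\Delta_e(R\cup\{e\},B)$. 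This gives $w(S)\le r(R,B)+\sum_{e\in S}\Delta_e(R\cup\{e\},B)$.
\item Averaging over $S\sim\lambda$ produces $\sum_e x_e\EE[\Delta_e(R\cup\{e\},B)]$. Since $\Delta_e(R\cup\{e\},B)$ is independent of the event $e\in R$, we have $qx_e\EE[\Delta_e(R\cup\{e\},B)]=\EE[\mathds{1}[e\in R]\,\Delta_e(R,B)]$.
\item Summing over $e$ and applying the total-sensitivity bound yields $q\,w^{\T}x\le(1+q)\,\EE[r(R,B)]$.
\end{itemize}
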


The balance guarantee $1/(1+q)$ is tight in the worst case: for every $q \in (0, 1]$ and every $p > 1/(1+q)$, there exists a knapsack instance whose integral knapsack polytope admits no $(q, p)$-balanced contention resolution scheme. We defer the proof to \cref{apx:tight}.

\begin{proposition} \label{prop:tight}
    Let $q \in (0, 1]$. For every $p > 1/(1+q)$, there exists a knapsack instance whose integral knapsack polytope admits no $(q, p)$-balanced contention resolution scheme. Equivalently, for every $\varepsilon > 0$, there exist $n \in \NN$, $d_1, \ldots, d_n \in \NN$, $B \in \ZZ_+$, $x \in P_{\mathcal K}^{\mathsf{int}}$ and $w \in \RR_+^n$ such that
    $$ \EE\left[\max\left\{ w(S) : S \subseteq R(qx), d(S) \leq B \right\}\right] < q\left(\frac{1}{1 + q}+\varepsilon\right) w^{\T} x. $$
\end{proposition}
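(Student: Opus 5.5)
The plan is to prove the second (equivalent) formulation directly. I will exhibit an explicit family of instances consisting of one \emph{big} item that fills the knapsack and many \emph{small} unit items. A fractional point mixing these two kinds of solutions forces any contention resolution scheme to lose a factor of nearly $1+q$. By the characterization of \citet[Theorem 4.3]{chekuri2014submodular}, violating the expected-max inequality rules out a $(q,p)$-balanced scheme for every $p > 1/(1+q)$.

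\textbf{Construction.} Fix $m \in \NN$ and set $n = m+1$ and $B = m$. Item $0$ has size $d_0 = m$, and items $1,\dots,m$ have size $1$. The feasible sets are exactly $\{0\}$, $\emptyset$, and the subsets of $[m]$. Hence, for $\lambda \in [0,1]$, the point $x$ with $x_0 = \lambda$ and $x_i = 1-\lambda$ for $i \geq 1$ is $\lambda \chi^{\{0\}} + (1-\lambda)\chi^{[m]}$. This point lies in $P_{\mathcal K}^{\mathsf{int}}$. Set $w_i = 1$ for the small items and $w_0 = W$, where $a := qm(1-\lambda)$ is the expected number of sampled small items and we choose $W := a$. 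Then
\[
q\,w^{\T}x = q\lambda W + qm(1-\lambda) = a(1+q\lambda).
\]

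\textbf{Evaluating the expected maximum.} Let $X \sim \Bin(m, q(1-\lambda))$ be the number of sampled small items, which is independent of whether item $0$ is sampled. Since the big item is incompatible with every small item, the optimum over $R(qx)$ is $\max\{W,X\}$ if item $0$ is sampled and $X$ otherwise. Therefore
\[
\EE[\max] = q\lambda\,\EE[\max\{a,X\}] + (1-q\lambda)\,a .
\]
Using $\max\{a,X\} \leq a + |X-a|$ and $\EE|X-a| \leq \sqrt{\Var X} \leq \sqrt a$, we get
\[
\EE[\max] \leq a + q\lambda\sqrt a .
\]
The ratio to $q\,w^{\T}x$ is therefore at most
\[
\frac{1}{1+q\lambda} + \frac{1}{\sqrt a}.
\]

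\textbf{Choosing parameters.} The main tension is that we need $\lambda \to 1$, so that $1/(1+q\lambda) \to 1/(1+q)$, while still keeping $a = qm(1-\lambda) \to \infty$, so that the binomial concentrates. I will take $\lambda = 1 - m^{-1/2}$, which gives $a = q\sqrt m \to \infty$. Then for $m$ large enough the ratio is below $1/(1+q) + \varepsilon$.

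Two details remain. First, $W$ must be an admissible weight, which it is since $w \in \RR_+^n$ is allowed. Second, $x$ has the stated form, and for fixed $m$ all quantities are finite. The only step needing care is the concentration bound above; Chebyshev or Jensen suffices, so I do not expect a genuine obstacle.
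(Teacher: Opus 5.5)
Your proposal is correct and is essentially the paper's proof: with $m = M^2$, your choice $\lambda = 1 - m^{-1/2}$ gives exactly the paper's marginals $x_n = 1 - 1/M$ and $x_i = 1/M$. Both arguments then use the independence of the big item from the binomial count $X$, together with $\EE[|X - \EE[X]|] \leq \sqrt{\Var[X]}$. The only difference is cosmetic: you set the big item's weight to exactly $q\sqrt{m}$ rather than $\lfloor qM \rfloor$, which makes $q\,w^{\T}x = a(1+q\lambda)$ exact and removes the need for the paper's lower-bound estimate on $w^{\T}x$.
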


\paragraph{Rounding Algorithms for Simple Graphs.} We now give a brief overview of the rounding algorithms for simple graphs. For simplicity, the discussion of polynomial-time implementation is deferred to \cref{subsec:polytime}.

Let $x \in \RR^E$ be a feasible solution of \eqref{eq:knapsack-lp}. First, let us consider the case where the graph is simple and bipartite, with bipartition $U \sqcup W$. We say that vertices in $U$ are \emph{sources} and those in $W$ are \emph{sinks}.

For each source $u \in U$, since $x \in K(u)$, we can express $x|_{\delta(u)}$ as a convex combination of (the characteristic vectors of) subsets $S \subseteq \delta(u)$ such that $d(S) \leq b(u)$; in other words, there exists a distribution $\mathcal D_u$ over such subsets satisfying
\begin{align*}
    \Pr_{S \sim \mathcal D_u}[e \in S] = x_e && \forall e \in \delta(u).
\end{align*}
Independently for each $u \in U$, we sample a subset $S_u \sim \mathcal D_u$. One may think of each source $u \in U$ \emph{proposing} to its neighboring sinks a subset $S_u \subseteq \delta(u)$ that satisfies the capacity constraint at $u$.

Now, for each sink $v \in W$, we aggregate all proposed edges from its neighboring sources, and let $R_v \subseteq \delta(v)$ denote this subset of proposed edges, i.e.,
$$ R_v := \bigcup_{u \in U} \left(S_u \cap \delta(v)\right). $$
Since the graph is simple, distinct edges in $R_v$ are proposed by distinct neighboring sources. Since the sets $S_u$ are sampled independently across sources, it follows that $R_v$ has exactly the distribution $R(x|_{\delta(v)})$. In other words, each edge $e \in \delta(v)$ is included in $R_v$ independently with probability $x_e$.

Hence, we can apply \Cref{thm:knapsack} to obtain a maximum weight subset $T_v \subseteq R_v$ that satisfies the capacity constraint at $v$ and that has expected weight at least $1/2 \cdot \sum_{e \in \delta(v)} w(e) x_e$. Taking the union of $T_v$ over all sinks $v \in W$ yields a random demand matching with expected weight at least
$$ \frac{1}{2} \sum_{e \in E} w(e) x_e. $$
Therefore, the integrality gap of \eqref{eq:knapsack-lp} is at most $2$ for simple bipartite graphs.

General simple graphs do not have an intrinsic bipartition of vertices into sources and sinks. We therefore independently label each vertex as a source with probability $q$ and as a sink otherwise, for some parameter $q \in [0, 1]$ to be determined later. We then apply the same procedure as in the simple bipartite case. Each source $u$ proposes a random subset $S_u \subseteq \delta(u)$ with marginals $x|_{\delta(u)}$ that satisfies the capacity constraint at $u$. Conditioned on a vertex $v$ being a sink, it receives a random subset $R_v \subseteq \delta(v)$ in which each edge $e \in \delta(v)$ is included independently with probability $qx_e$. By \Cref{thm:knapsack}, we can select a maximum weight subset $T_v \subseteq R_v$ that satisfies the capacity constraint at $v$ and whose expected weight is at least a $q/(1+q)$-fraction of the fractional weight of the edges incident to $v$. Taking the union of $T_v$ over all sinks $v$ yields a demand matching with expected weight at least
$$ \frac{2q(1 - q)}{1 + q} \sum_{e \in E} w(e) x_e. $$
Hence, the integrality gap of \eqref{eq:knapsack-lp} on simple graphs is at most $(1 + q)/(2q(1-q))$. This quantity is minimized at $q = \sqrt{2} - 1$, giving an upper bound of
$$ \frac{3}{2} + \sqrt{2} \approx 2.914 $$
on the integrality gap of \eqref{eq:knapsack-lp}.

\subsection{Organization of This Paper}

This paper is organized as follows. In \cref{sec:knapsack}, we prove \Cref{thm:knapsack}. In \cref{sec:simple}, we prove \Cref{thm:main} for simple graphs. In \cref{sec:multi}, we introduce the additional machinery and extend the results to multigraphs. In \cref{sec:concluding}, we conclude with several remarks and open questions.

%% file: knapsack.tex
\section{CR Scheme for the Integral Knapsack Polytope} \label{sec:knapsack}

In this section, we prove \Cref{thm:knapsack}. Let $d_1, \ldots, d_n \in \NN$ and $B \in \ZZ_+$. Let $w \in \RR_+^n$. Recall that
\begin{gather*}
    \mathcal K := \left\{ S \subseteq [n] : d(S) \leq B \right\}, \qquad P_{\mathcal K}^{\mathsf{int}} := \Conv\left\{ \chi^S : S \in \mathcal K \right\}.
\end{gather*}
Let $x \in P_{\mathcal K}^{\mathsf{int}}$. Let $q \in [0, 1]$. Let $R := R(qx) \subseteq [n]$ be the random set produced by the sampling process, and let $r(R, B)$ denote the weight of a maximum weight feasible knapsack subset of $R$, i.e.,
$$ r(R, B) := \max \{ w(T) : T \subseteq R, d(T) \leq B \}. $$
The goal is to prove that
$$ \EE[r(R, B)] \geq \frac{q}{1 + q} \cdot w^{\T} x. $$

We begin with the intuition of the proof. Since $x \in P_{\mathcal K}^{\mathsf{int}}$, it can be interpreted as the marginals of a distribution over feasible subsets $S \subseteq [n]$. Our goal is then to compare the weight of a feasible subset $S$ from this distribution with $r(R, B)$. To this end, we insert the elements of $S$ into $R$ one at a time. After all insertions, the resulting set contains the feasible subset $S$, so its optimal knapsack value is at least $w(S)$. Hence, the total increase in the optimal knapsack value in this process is at least the gap between $w(S)$ and $r(R, B)$. We measure the increase from each element by its \emph{deletion sensitivity}, and prove two properties of this quantity:
\begin{itemize}[itemsep=0pt]
    \item {\bf \em Monotonicity:} deletion sensitivities do not decrease when other items are removed.
    \item {\bf \em Bounded total sensitivity:} the total deletion sensitivity over a set is at most the optimal knapsack value of that set.
\end{itemize}
These two properties allow us to bound the total increase of inserting $S$ into $R$. Finally, averaging over $S$ and $R$ yields the desired bound.

Formally, we introduce the following definitions. For $S \subseteq [n]$ and $k \in \ZZ_+$, we define
$$ r(S, k) := \max \{ w(T) : T \subseteq S, d(T) \leq k \}. $$
For $S \subseteq [n]$, $e \in [n]$, and $\ell \in \ZZ_+$, we define the \emph{deletion sensitivity} of $e$ with respect to $S$ and capacity $\ell$ to be
$$ \Delta_e(S, \ell) := \max_{k \in \{ 0, \ldots, \ell \}} (r(S, k) - r(S \setminus \{ e \}, k)). $$
In other words, $\Delta_e(S, \ell)$ measures the maximum decrease in the optimal knapsack value, over all capacities up to $\ell$, caused by deleting $e$ from $S$. Taking this maximum ensures the deletion monotonicity property proved below, while the deletion loss at a fixed capacity can decrease when other items are removed. Some properties of $\Delta_e(S, \ell)$ are immediate: $\Delta_e(S, \ell)$ is nonnegative and nodecreasing in $\ell$, and $\Delta_e(S, \ell) = 0$ for all $e \in [n] \setminus S$ and $\ell \in \ZZ_+$.

Now, we state and prove the two properties mentioned above.

\begin{lemma}[Deletion monotonicity] \label{lem:mono}
    Let $S \subseteq [n]$. Let $e, f \in S$ be distinct. Let $\ell \in \ZZ_+$. Then
    $$ \Delta_e(S, \ell) \leq \Delta_e(S \setminus \{ f \}, \ell). $$
\end{lemma}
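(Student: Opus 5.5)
The plan is to fix a capacity $k \in \{0, \ldots, \ell\}$ attaining the maximum in $\Delta_e(S, \ell)$. I will then show that the loss $r(S, k) - r(S \setminus \{e\}, k)$ is bounded by the loss from deleting $e$ from $S \setminus \{f\}$ at some capacity $k' \in \{0, \ldots, k\} \subseteq \{0, \ldots, \ell\}$. That loss is at most $\Delta_e(S \setminus \{f\}, \ell)$ by definition, which proves the claim. The idea is to condition on whether $f$ is used by an optimal solution for $(S, k)$. This is exactly where the maximum over capacities in the definition of $\Delta_e$ is needed: the conditioning may shift the capacity from $k$ to $k - d_f$.

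Let $T \subseteq S$ with $d(T) \leq k$ and $w(T) = r(S, k)$ be an optimal solution.

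\emph{Case 1: $f \notin T$.} Then $T \subseteq S \setminus \{f\}$, so $r(S, k) = r(S \setminus \{f\}, k)$. Also $S \setminus \{e, f\} \subseteq S \setminus \{e\}$, so monotonicity of $r$ under inclusion gives $r(S \setminus \{e\}, k) \geq r(S \setminus \{e, f\}, k)$. Hence
$$ r(S, k) - r(S \setminus \{e\}, k) \leq r(S \setminus \{f\}, k) - r((S \setminus \{f\}) \setminus \{e\}, k) \leq \Delta_e(S \setminus \{f\}, \ell). $$

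\emph{Case 2: $f \in T$.} Then $d_f \leq k$, and $T \setminus \{f\}$ is a feasible subset of $S \setminus \{f\}$ for capacity $k - d_f$. Conversely, any optimal set for $(S \setminus \{f\}, k - d_f)$ together with $f$ is feasible for $(S, k)$. So
$$ r(S, k) = w_f + r(S \setminus \{f\}, k - d_f). $$
Similarly, take an optimal set for $(S \setminus \{e, f\}, k - d_f)$ and add $f$; this set is feasible for $(S \setminus \{e\}, k)$ because $f \neq e$. Hence
$$ r(S \setminus \{e\}, k) \geq w_f + r(S \setminus \{e, f\}, k - d_f). $$
Subtracting gives
$$ r(S, k) - r(S \setminus \{e\}, k) \leq r(S \setminus \{f\}, k - d_f) - r((S \setminus \{f\}) \setminus \{e\}, k - d_f). $$
Since $0 \leq k - d_f \leq \ell$, the right-hand side is at most $\Delta_e(S \setminus \{f\}, \ell)$.

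In both cases $\Delta_e(S, \ell) \leq \Delta_e(S \setminus \{f\}, \ell)$. The only delicate point is Case 2: the comparison lands at the shifted capacity $k - d_f$ rather than at $k$. This is why the fixed-capacity deletion loss alone would not be monotone, and why taking the maximum over all capacities up to $\ell$ resolves it.
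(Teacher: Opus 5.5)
Your proof is correct and is essentially the paper's argument. The paper writes both $r(S,k)=\max\{r(S\setminus\{f\},k),\, r(S\setminus\{f\},k-d_f)+w_f\}$ and the analogous recursion for $r(S\setminus\{e\},k)$, then bounds each branch by $\Delta_e(S\setminus\{f\},\ell)$ at capacity $k$ or $k-d_f$. That is exactly your case split on whether an optimal set for $(S,k)$ contains $f$; the only cosmetic difference is that the paper proves the bound for every $k\le\ell$ and then maximizes, whereas you fix a maximizing $k$ first.
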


\begin{proof}
    Let $T_1 := S \setminus \{ f \}$ and $T_2 := S \setminus \{ e, f \}$. By definition, for all $k \in \{ 0, \ldots, \ell \}$,
    \begin{equation} \label{eq:Delta}
        r(T_1, k) \leq r(T_2, k) + \Delta_e(T_1, \ell).
    \end{equation}
    For all $k \in \{ 0, \ldots, \ell \}$,
    $$ r(S, k) = \max\{ r(T_1, k), r(T_1, k - d_f) + w_f \}, $$
    and
    $$ r(S \setminus \{ e \}, k) = \max\{ r(T_2, k), r(T_2, k - d_f) + w_f \}, $$
    where the second term in each maximum is omitted if $k < d_f$. By \eqref{eq:Delta}, for all $k \in \{ 0, \ldots, \ell \}$,
    $$ r(S, k) \leq r(S \setminus \{ e \}, k) + \Delta_e(T_1, \ell). $$
    Taking the maximum over $k \in \{ 0, \ldots, \ell \}$ completes the proof.
\end{proof}

\begin{corollary} \label{cor:mono}
    Let $S, T \subseteq [n]$ be such that $T \subseteq S$. Let $e \in T$. Let $\ell \in \ZZ_+$. Then
    $$ \Delta_e(S, \ell) \leq \Delta_e(T, \ell). $$
\end{corollary}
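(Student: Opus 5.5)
The plan is to derive \Cref{cor:mono} from \Cref{lem:mono} by induction on $|S \setminus T|$, deleting the elements of $S \setminus T$ from $S$ one at a time and using that $e$ is never among the deleted elements.

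Enumerate $S \setminus T = \{ f_1, \ldots, f_m \}$, and for $j = 0, \ldots, m$ set $S_j := S \setminus \{ f_1, \ldots, f_j \}$. Then $S_0 = S$ and $S_m = T$. Since $e \in T$, we have $e \in S_j$ for every $j$. Also, $f_{j+1} \in S_j$ and $f_{j+1} \neq e$, because $f_{j+1} \notin T$.

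Apply \Cref{lem:mono} to the set $S_j$, the distinct elements $e, f_{j+1} \in S_j$, and the capacity $\ell$. This gives
$$ \Delta_e(S_j, \ell) \leq \Delta_e(S_j \setminus \{ f_{j+1} \}, \ell) = \Delta_e(S_{j+1}, \ell). $$
Chaining these inequalities for $j = 0, \ldots, m-1$ yields $\Delta_e(S, \ell) \leq \Delta_e(T, \ell)$. The case $m = 0$, where $S = T$, is trivial.

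There is no real obstacle. The only thing to check is that the hypotheses of \Cref{lem:mono}, namely $e, f \in S_j$ and $e \neq f$, hold at every step. This is guaranteed because $e \in T \subseteq S_j$ and every $f_{j+1}$ lies outside $T$.
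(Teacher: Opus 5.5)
Your proof is correct and is essentially the paper's approach: the paper states \Cref{cor:mono} without proof as an immediate consequence of \Cref{lem:mono}, and the intended argument is exactly your repeated deletion of the elements of $S \setminus T$. Your check that $e \in S_j$, $f_{j+1} \in S_j$ and $f_{j+1} \neq e$ at each step covers the hypotheses of \Cref{lem:mono}.
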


\begin{lemma}[Bounded total deletion sensitivity] \label{lem:del}
    Let $S \subseteq [n]$. Let $\ell \in \ZZ_+$. Then
    $$ \sum_{e \in S} \Delta_e(S, \ell) \leq r(S, \ell). $$
\end{lemma}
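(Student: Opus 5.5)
Order the elements of $S$ as $e_1, \ldots, e_m$ and use a telescoping sum over the chain $S_i := \{e_1, \ldots, e_i\}$, with $S_0 = \emptyset$ and $S_m = S$. The aim is to bound each term $\Delta_{e_i}(S, \ell)$ by an increment that telescopes to $r(S, \ell)$. By \Cref{cor:mono}, applied with $T = S_i \subseteq S$ and $e_i \in S_i$, we have $\Delta_{e_i}(S, \ell) \leq \Delta_{e_i}(S_i, \ell)$. So it suffices to show
$$ \sum_{i=1}^m \Delta_{e_i}(S_i, \ell) \leq r(S, \ell). $$

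The natural approach is to write
$$ \Delta_{e_i}(S_i, \ell) = \max_{k \le \ell} \bigl( r(S_i, k) - r(S_{i-1}, k) \bigr). $$
One would like this to equal $r(S_i, \ell) - r(S_{i-1}, \ell)$, but the maximum need not be attained at $k = \ell$, so the direct telescoping fails. This is the main obstacle. To handle it, I would choose the ordering adaptively rather than arbitrarily, and proceed by induction on $|S|$.

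The key step is to find some $f \in S$ with $\Delta_f(S, \ell) \leq r(S, \ell) - r(S \setminus \{f\}, \ell)$. Given such an $f$, the rest follows. \Cref{cor:mono} gives $\Delta_e(S, \ell) \leq \Delta_e(S \setminus \{f\}, \ell)$ for every $e \neq f$. The induction hypothesis gives
$$ \sum_{e \in S \setminus \{f\}} \Delta_e(S \setminus \{f\}, \ell) \leq r(S \setminus \{f\}, \ell). $$
Adding these yields the claim.

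To find $f$, take an optimal set $T^* \subseteq S$ for $r(S, \ell)$. If some $f \in S \setminus T^*$ exists, then $r(S \setminus \{f\}, \ell) = r(S, \ell)$, so we need $\Delta_f(S, \ell) = 0$. That can fail at smaller capacities, so this choice is not automatically good.

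A cleaner alternative is to strengthen the induction to all capacities simultaneously. Define $g(S, \ell) := \max_{k \le \ell} r(S, k)$. Since $r$ is monotone in $k$, this equals $r(S, \ell)$. I would then prove the stronger statement
$$ \sum_{e \in S} \Delta_e(S, \ell) \leq r(S, \ell) \quad \text{for all } \ell, $$
by induction on $\ell$ together with $|S|$. For the step, pick $f$ to be the element minimizing $d_f$, or alternatively the last-inserted element in a fixed order, and bound $\Delta_f(S, \ell) \le w_f$. Using the recursion $r(S, k) = \max\{ r(S \setminus \{f\}, k), r(S \setminus \{f\}, k - d_f) + w_f \}$ from the proof of \Cref{lem:mono}, we get
$$ \Delta_f(S, \ell) \leq w_f, $$
and also
$$ \Delta_f(S, \ell) \leq \max_{k \le \ell} \bigl( r(S \setminus \{f\}, k - d_f) + w_f - r(S \setminus \{f\}, k) \bigr). $$

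Next, I would compare the sensitivities of the remaining elements at capacity $\ell - d_f$ and at capacity $\ell$, aiming for
$$ \sum_{e \ne f} \Delta_e(S, \ell) \le r(S \setminus \{f\}, \ell - d_f) \quad \text{whenever } f \text{ is used}. $$
If this holds, then adding $w_f$ gives at most $r(S, \ell)$. The hardest step I expect is making this capacity-shift comparison rigorous. I would try to prove it by a direct exchange argument on optimal solutions at each capacity $k \le \ell$.
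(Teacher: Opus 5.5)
Your reduction is the right one: if you can find $f \in S$ whose sensitivity is an actual drop, $\Delta_f(S,\cdot) = r(S,\cdot) - r(S\setminus\{f\},\cdot)$, then \Cref{cor:mono} plus induction finishes. You also correctly saw that the induction hypothesis must hold for all capacities. \textbf{The gap:} you never produce such an $f$, and at the fixed capacity $\ell$ none need exist.

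Take items $a,c,e,b$ with $(d,w) = (1,1),(2,3),(3,5),(4,7)$ and $\ell = 5$. Then $r(S,\cdot) = 0,1,3,5,7,8$ on capacities $0,\ldots,5$. At capacity $5$ both $\{a,b\}$ and $\{c,e\}$ are optimal, so $r(S,5) - r(S\setminus\{f\},5) = 0$ for every $f \in S$. Yet $\Delta_a(S,5)=1$ (attained at $k=1$), $\Delta_c(S,5)=2$ (at $k=2$), $\Delta_e(S,5)=1$ (at $k=3$), and $\Delta_b(S,5)=1$ (at $k=4$). So no choice of $f$ makes your key step hold at capacity $\ell$.

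Your fallback does not repair this. The capacity-shift inequality $\sum_{e\neq f}\Delta_e(S,\ell) \le r(S\setminus\{f\},\ell-d_f)$ is left unproved, and it is false for an arbitrary (for example, last-inserted) $f$. With $S=\{a,b\}$, $(d_a,w_a)=(1,1)$, $(d_b,w_b)=(2,2)$ and $\ell=2$, the element $f=b$ is used, but $\Delta_a(S,2)=1>0=r(\{a\},0)$. For the minimum-demand choice, you say nothing about the case where $f$ is not used. In the same instance $f=a$ is unused and $\Delta_a(S,2)=1>0=r(S,2)-r(\{b\},2)$, and plain monotonicity plus induction after deleting $a$ would give the bound $1+2>2$.

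\textbf{The missing idea:} lower the capacity rather than shift it. For each $e\in S$ pick $k_e\le \ell$ attaining $\Delta_e(S,\ell)$. Let $k:=\max_{e} k_e$ and let $f$ be an element with $k_f=k$. Since $\Delta_e(S,\cdot)$ is nondecreasing and $k_e\le k\le \ell$, you get $\Delta_e(S,\ell)=\Delta_e(S,k)$ for every $e\in S$, and $\Delta_f(S,k) = r(S,k)-r(S\setminus\{f\},k)$ exactly. Now run your own reduction at capacity $k$: monotonicity, then the induction hypothesis for $S\setminus\{f\}$ at capacity $k$. This gives $\sum_{e\in S}\Delta_e(S,\ell)\le r(S,k)\le r(S,\ell)$. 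In the four-item example above this selects $f=b$ and $k=4$, giving the bound $1+r(S\setminus\{b\},4)=7\le 8$.
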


\begin{proof}
    We proceed by induction on $|S|$. The base case $S = \emptyset$ is trivial. For the induction step, suppose that $S \neq \emptyset$ and that $\sum_{e \in T} \Delta_e(T, k) \leq r(T, k)$ for all $T \subseteq [n]$ with $|T| < |S|$ and for all $k \in \ZZ_+$. For $e \in S$, let $k_e \in \{ 0, \ldots, \ell \}$ be such that $\Delta_e(S, \ell) = r(S, k_e) - r(S \setminus \{ e \}, k_e)$. Let $f \in S$ be such that $k_f = k := \max_{e \in S} k_e$. Then $\Delta_e(S, \ell) = \Delta_e(S, k)$ for all $e \in S$, and $\Delta_f(S, k) = r(S, k) - r(S \setminus \{ f \}, k)$. Hence,
    \begin{align*}
        \sum_{e \in S} \Delta_e(S, \ell) &= \Delta_f(S, k) + \sum_{e \in S \setminus \{ f \}} \Delta_e(S, k) \\
        &\leq r(S, k) - r(S \setminus \{ f \}, k) + \sum_{e \in S \setminus \{ f \}} \Delta_e(S \setminus \{ f \}, k) \\
        &\leq r(S, k) - r(S \setminus \{ f \}, k) + r(S \setminus \{ f \}, k) \\
        &= r(S, k) \\
        &\leq r(S, \ell),
    \end{align*}
    where the first inequality follows from \Cref{lem:mono}, and where the second inequality follows from the inductive hypothesis. This completes the proof.
\end{proof}

Next, we formalize the insertion argument above in the following lemma, which follows from deletion monotonicity. For each $T \subseteq [n]$ and $e \in [n]$, we define
\begin{equation} \label{eq:insert}
    T \vee e := T \cup \{ e \}.
\end{equation}

\begin{lemma}[Insertion lemma] \label{lem:insertion}
    Let $S, T \subseteq [n]$ be such that $d(S) \leq B$. Then
    $$ w(S) \leq r(T, B) + \sum_{e \in S} \Delta_e(T \vee e, B). $$
\end{lemma}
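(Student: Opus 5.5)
We insert the elements of $S$ into $T$ one at a time and telescope. Enumerate $S \setminus T = \{e_1, \ldots, e_m\}$ in an arbitrary order, and set $T_0 := T$ and $T_i := T \cup \{e_1, \ldots, e_i\}$ for $i = 1, \ldots, m$. Since $S \subseteq T_m$ and $d(S) \leq B$, the set $S$ is a feasible subset of $T_m$. Hence $w(S) \leq r(T_m, B)$. Telescoping then gives
\begin{align*}
    w(S) \leq r(T, B) + \sum_{i=1}^m \bigl( r(T_i, B) - r(T_{i-1}, B) \bigr).
\end{align*}

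Each increment is a deletion loss at the fixed capacity $B$. Indeed, $T_{i-1} = T_i \setminus \{e_i\}$, so by definition
\begin{align*}
    r(T_i, B) - r(T_{i-1}, B) \leq \Delta_{e_i}(T_i, B),
\end{align*}
since $\Delta$ is a maximum over capacities $k \leq B$ that includes $k = B$. Now $e_i \in T \vee e_i \subseteq T_i$, so \Cref{cor:mono} (applied with the larger set $T_i$ and the smaller set $T \vee e_i$, both containing $e_i$) yields
\begin{align*}
    \Delta_{e_i}(T_i, B) \leq \Delta_{e_i}(T \vee e_i, B).
\end{align*}
This is exactly where deletion monotonicity enters. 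The sensitivity of an element in the large intermediate set is dominated by its sensitivity when only $T$ is present.

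Finally, $\Delta_e(T \vee e, B) \geq 0$ for every $e$, so extending the sum from $S \setminus T$ to all of $S$ only increases the right-hand side. This gives the claimed inequality. The one step needing care is the direction of \Cref{cor:mono}: the intermediate set $T_i$ contains $T \vee e_i$, so removing the other inserted elements can only increase the sensitivity, which is the direction we need.
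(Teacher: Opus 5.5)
Your proof is correct and follows the paper's argument. You insert the elements of $S$ one at a time, bound each telescoping increment by $\Delta_{e_i}(T_i, B)$ via the choice $k = B$, and apply \Cref{cor:mono} using $T \vee e_i \subseteq T_i$. The only cosmetic difference is that you enumerate $S \setminus T$ and then use $\Delta \geq 0$ to extend the sum to all of $S$, whereas the paper enumerates all of $S$, with the increments for $e_j \in T$ simply being zero.
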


\begin{proof}
    Suppose that $S = \{ e_1, \ldots, e_m \}$. For each $j \in \{ 0, \ldots, m \}$, let $S_j := T \cup \{ e_1, \ldots, e_j \}$. For each $j \in [m]$,
    $$ r(S_j, B) - r(S_{j - 1}, B) \leq \Delta_{e_j}(S_j, B) \leq \Delta_{e_j}(T \vee e_j, B), $$
    where the second inequality follows from \Cref{cor:mono} and from the fact that $T \vee e_j \subseteq S_j$. Since $S \subseteq T \cup S$ and since $d(S) \leq B$,
    \begin{align*}
        w(S) &\leq r(T \cup S, B) \\
        &= r(S_0, B) + r(S_m, B) - r(S_0, B) \\
        &= r(T, B) + \sum_{j = 1}^m \left(r(S_j, B) - r(S_{j - 1}, B)\right) \\
        &\leq r(T, B) + \sum_{j = 1}^m \Delta_{e_j}(T \vee e_j, B) \\
        &= r(T, B) + \sum_{e \in S} \Delta_e(T \vee e, B).
    \end{align*}
    This completes the proof.
\end{proof}

Now, we are ready to prove \Cref{thm:knapsack} using \Cref{lem:del,lem:insertion}.

\begin{proof}[Proof of \Cref{thm:knapsack}]
    Since $x \in P_{\mathcal K}^{\mathsf{int}}$, there exists a distribution $\mathcal D$ over subsets $S \subseteq [n]$ with $d(S) \leq B$ such that
    $$ \Pr_{S \sim \mathcal D}[e \in S] = x_e. $$
    For each $S$ in the support of $\mathcal D$, \Cref{lem:insertion} implies that, after taking expectation over $R$,
    \begin{equation} \label{eq:insertion}
        w(S) \leq \EE_R[r(R, B)] + \sum_{e \in S} \EE_R\left[\Delta_e(R \vee e, B)\right].
    \end{equation}
    Averaging \eqref{eq:insertion} over $S \sim \mathcal D$ yields that
    \begin{align*}
        w^{\T} x &= \EE_{S \sim \mathcal D}[w(S)] \\
        &\leq \EE_R[r(R, B)] + \EE_{S \sim \mathcal D}\left[\sum_{e \in S} \EE_R\left[\Delta_e(R \vee e, B)\right]\right] \\
        &= \EE_R[r(R, B)] + \sum_{e \in [n]} \Pr_{S \sim \mathcal D}[e \in S] \cdot \EE_R\left[\Delta_e(R \vee e, B)\right] \\
        &= \EE_R[r(R, B)] + \sum_{e \in [n]} x_e \EE_R\left[\Delta_e(R \vee e, B)\right].
    \end{align*}
    For each $e \in [n]$, we observe that $\Delta_e(R \vee e, B)$ is independent of whether $e \in R$. Moreover, we have $R \vee e = R$ for each $e \in R$. Hence, for each $e \in [n]$,
    $$ qx_e \EE_R\left[\Delta_e(R \vee e, B)\right] = \Prob_R[e \in R] \cdot \EE_R\left[\Delta_e(R \vee e, B)\right] = \EE_R\left[\mathds 1[e \in R] \cdot \Delta_e(R, B)\right]. $$
    Therefore,
    \begin{align*}
        qw^{\T} x &\leq q\EE_R[r(R, B)] + \sum_{e \in [n]} qx_e \EE_R\left[\Delta_e(R \vee e, B)\right] \\
        &= q\EE_R[r(R, B)] + \sum_{e \in [n]} \EE_R\left[\mathds 1[e \in R] \cdot \Delta_e(R, B)\right] \\
        &= q\EE_R[r(R, B)] + \EE_R\left[\sum_{e \in R} \Delta_e(R, B)\right] \\
        &\leq q\EE_R[r(R, B)] + \EE_R[r(R, B)] \\
        &= (1 + q) \EE_R[r(R, B)],
    \end{align*}
    where the second inequality follows from \Cref{lem:del}. Rearranging completes the proof.
\end{proof}

%% file: simple.tex
\section{Rounding Algorithms for Simple Graphs} \label{sec:simple}

In this section, we prove \Cref{thm:main} for simple graphs. We first consider general simple graphs, where we create a source-sink structure by independently labeling each vertex as a source with probability $q$ and as a sink otherwise. This yields a $(3/2 + \sqrt{2} + \varepsilon)$-approximation, for every $\varepsilon > 0$, by optimizing over $q$. We then consider bipartite simple graphs, where the given bipartition provides a natural, deterministic source-sink structure that eliminates the loss from random labeling, yielding a $(2 + \varepsilon)$-approximation for every $\varepsilon > 0$.

We formulate the rounding algorithms below with a feasible solution $x$ of \eqref{eq:knapsack-lp} given as input. For the integrality gap analysis, we take $x$ to be an optimal solution of \eqref{eq:knapsack-lp}. For clarity, we first present and analyze ``idealized'' versions of these rounding algorithms, which use exact convex decompositions and knapsack optimizations. These idealized versions establish upper bounds of $3/2 + \sqrt{2}$ and $2$ on the integrality gap of \eqref{eq:knapsack-lp} for general and bipartite simple graphs, respectively. In \cref{subsec:polytime}, we obtain fully polynomial-time implementations by first computing a near-optimal feasible solution of \eqref{eq:knapsack-lp} and then implementing the rounding steps approximately, with an arbitrarily small additional loss in the approximation guarantees.

\subsection{General Simple Graphs}

Informally, given a feasible solution $x$ of \eqref{eq:knapsack-lp} and a parameter $q \in [0, 1]$ to be determined later, our algorithm independently labels each vertex $v \in V$ as a \emph{source} with probability $q$ and as a \emph{sink} otherwise. For each source $u$, since $x \in K(u)$, the algorithm samples a subset $S_u \subseteq \delta(u)$ such that $d(S_u) \leq b(u)$ and $\Pr[e \in S_u] = x_e$ for all $e \in \delta(u)$. For each sink $v$, let
$$ R_v := \left\{ e = uv \in \delta(v) : u \text{ is a source}, e \in S_u \right\} $$
be the set of incident edges sampled by neighboring sources. Since the graph is simple, distinct edges in $\delta(v)$ are associated with distinct neighboring sources. Since the sets $S_u$ are sampled independently across sources, conditioned on $v$ being a sink, each edge $e \in \delta(v)$ is included in $R_v$ independently with probability $qx_e$. The algorithm then takes a maximum weight subset $T_v \subseteq R_v$ such that $d(T_v) \leq b(v)$. Finally, the algorithm returns the union of the sets $T_v$ over all sinks $v$. We formally state the algorithm in \Cref{alg:simple}.

\begin{algorithm}[ht]
    \caption{A rounding algorithm for \textsc{Demand Matching} on simple graphs.}
    \label{alg:simple}
    \KwIn{a simple graph $G = (V, E)$, $b : V \to \ZZ_+$, $d : E \to \NN$, $w : E \to \ZZ_+$, $q \in [0, 1]$, and a feasible solution $x$ of \eqref{eq:knapsack-lp}.}
    \KwOut{a subset $M \subseteq E$.}
    \ForEach{$v \in V$ independently}{
        Let $\mathcal D_v$ be a distribution of $S \subseteq \delta(v)$ with $d(S) \leq b(v)$ and $\Pr[e \in S] = x_e$ for $e \in \delta(v)$. \\
        Label $v$ as a \emph{source} with probability $q$ and as a \emph{sink} otherwise.
    }
    \ForEach{source $u \in V$ independently}{
        Sample $S_u \sim \mathcal D_u$ independent of the vertex labels.
    }
    \ForEach{sink $v \in V$}{
        Let $R_v := \{ e = uv \in \delta(v) : u \text{ is a source}, e \in S_u \}$. \\
        Find a maximum weight $T_v \subseteq R_v$ such that $d(T_v) \leq b(v)$.
    }
    \Return{$M := \bigcup \{ T_v : v \in V \text{ sink} \}$}
\end{algorithm}

We analyze \Cref{alg:simple} on an input $(G = (V, E), b, d, w, x)$ where $G$ is simple.

\begin{lemma} \label{lem:simple-feasible}
    Let $M \subseteq E$ be the output of \Cref{alg:simple}. Then $d(M \cap \delta(v)) \leq b(v)$ for all $v \in V$.
\end{lemma}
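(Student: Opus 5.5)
The plan is to characterize exactly which edges of $M$ can be incident to a given vertex, and then to split into two cases according to the label of that vertex. The key structural observation comes first. Every edge $e \in M$ lies in $T_v$ for some sink $v$, and $T_v \subseteq R_v$. By the definition of $R_v$, the edge $e$ then has the form $e = uv$, where $u$ is a source, $v$ is a sink, and $e \in S_u$. In particular, no edge of $M$ joins two sources or two sinks. Each edge of $M$ therefore has exactly one sink endpoint, and it is selected only through the set $T_v$ of that endpoint.

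First I will fix a vertex $v$ that is a sink. Let $e \in M \cap \delta(v)$. By the observation above, $e$ has a unique sink endpoint. Since $v$ is a sink endpoint of $e$, it must be that one, so $e \in T_v$. Hence $M \cap \delta(v) \subseteq T_v$. The algorithm chooses $T_v$ so that $d(T_v) \leq b(v)$. Because $d$ is nonnegative, we conclude $d(M \cap \delta(v)) \leq d(T_v) \leq b(v)$.

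Next I will fix a vertex $u$ that is a source. Let $e \in M \cap \delta(u)$. Then $e = u'v$ with $u'$ a source, $v$ a sink, and $e \in S_{u'}$. Since $u$ is a source and an endpoint of $e$, and since $G$ has no self-loops, $u$ cannot be the sink endpoint $v$, so $u = u'$. Thus $e \in S_u$, and hence $M \cap \delta(u) \subseteq S_u$. The set $S_u$ is drawn from $\mathcal D_u$, whose support consists only of sets $S \subseteq \delta(u)$ with $d(S) \leq b(u)$. Therefore $d(M \cap \delta(u)) \leq d(S_u) \leq b(u)$.

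Every vertex is labeled either a source or a sink, so the two cases together cover all of $V$. There is no real obstacle in this argument. The only point needing care is to use the no-self-loop assumption and the fact that each selected edge has exactly one source endpoint and one sink endpoint. These ensure that the constraint at each vertex is governed either by $S_u$ alone or by $T_v$ alone.
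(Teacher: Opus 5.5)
Your proposal is correct and follows the same approach as the paper: for a source $u$ you use $M \cap \delta(u) \subseteq S_u$ with $d(S_u) \leq b(u)$, and for a sink $v$ you use $M \cap \delta(v) \subseteq T_v$ with $d(T_v) \leq b(v)$. The paper writes this in two lines, while you also justify why each selected edge has exactly one source endpoint and one sink endpoint. One minor remark: you do not need the no-self-loop assumption, because the source and sink endpoints of a selected edge carry different labels and so are automatically distinct.
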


\begin{proof}
    For each source $u \in V$, since $M \cap \delta(u) \subseteq S_u$, we have $d(M \cap \delta(u)) \leq d(S_u) \leq b(u)$. For each sink $v \in V$, we have $M \cap \delta(v) = T_v$, which is guaranteed to satisfy $d(T_v) \leq b(v)$ by the algorithm. This completes the proof.
\end{proof}

\begin{lemma} \label{lem:simple-weight}
    Let $M \subseteq E$ be the output of \Cref{alg:simple}. Then
    $$ \EE[w(M)] \geq w^{\T} x \cdot \frac{2q(1-q)}{1+q}. $$
    Choosing $q = \sqrt{2} - 1$ gives that $\EE[w(M)] \geq (6 - 4\sqrt{2}) \cdot w^{\T} x$.
\end{lemma}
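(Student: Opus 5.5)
The plan is to decompose $\EE[w(M)]$ sink by sink and apply \Cref{thm:knapsack} conditionally on each vertex being a sink. Since the sets $T_v$ for distinct sinks $v$ are disjoint (an edge between two sinks is never proposed, and an edge $uv$ with $u$ a source and $v$ a sink can lie only in $R_v$), we have $w(M) = \sum_{v \in V} \mathds 1[v \text{ sink}] \cdot w(T_v)$. Linearity of expectation then gives
\begin{align*}
    \EE[w(M)] = \sum_{v \in V} (1-q) \cdot \EE\left[w(T_v) \mid v \text{ sink}\right].
\end{align*}

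Fix $v$ and condition on $v$ being a sink. For each neighbor $u$ of $v$, the edge $e = uv$ belongs to $R_v$ if and only if $u$ is a source and $e \in S_u$. This event has probability $q x_e$, because the label of $u$ is independent of $S_u$ and $\Pr[e \in S_u] = x_e$. The graph is simple, so distinct edges of $\delta(v)$ have distinct other endpoints $u$. Each such event depends only on the label of $u$ and on $S_u$, and these are mutually independent across $u$ and independent of the label of $v$. Hence, conditioned on $v$ being a sink, $R_v$ is distributed exactly as $R(q x|_{\delta(v)})$. This independence check is the step requiring the most care. It is the only place where simplicity is used, and it also requires that $S_u$ be sampled independently of all labels.

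Since $x \in K(v)$, the restriction $x|_{\delta(v)}$ lies in the integral knapsack polytope on $\delta(v)$ with sizes $d$ and capacity $b(v)$. The set $T_v$ is a maximum weight feasible subset of $R_v$. The equivalent characterization in \Cref{thm:knapsack}, which is proved in \cref{sec:knapsack} in exactly the form $\EE[r(R,B)] \geq \frac{q}{1+q} w^{\T}x$, gives
\begin{align*}
    \EE\left[w(T_v) \mid v \text{ sink}\right] \geq \frac{q}{1+q} \sum_{e \in \delta(v)} w(e) x_e.
\end{align*}

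Summing over $v$, each edge is counted at both of its endpoints, so
\begin{align*}
    \EE[w(M)] \geq \frac{(1-q)q}{1+q} \cdot 2 \sum_{e \in E} w(e) x_e = \frac{2q(1-q)}{1+q}\, w^{\T}x.
\end{align*}

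For the final claim, substitute $q = \sqrt2 - 1$. Then $1 - q = 2 - \sqrt2$ and $1 + q = \sqrt2$, so
\begin{align*}
    \frac{2q(1-q)}{1+q} = \frac{2(\sqrt2-1)(2-\sqrt2)}{\sqrt2} = 2(\sqrt2-1)(\sqrt2-1) = 2(3-2\sqrt2) = 6-4\sqrt2,
\end{align*}
where we used $(2-\sqrt2)/\sqrt2 = \sqrt2 - 1$.
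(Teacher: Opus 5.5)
Your proposal is correct and follows essentially the same route as the paper: condition on $v$ being a sink, use simplicity and the independence of the $S_u$ to identify $R_v$ with $R(qx|_{\delta(v)})$, apply \Cref{thm:knapsack} in its expected-maximum form, and sum over $v$ with each edge counted twice. The paper additionally sets aside the degenerate cases $q \in \{0,1\}$ first (the bound is then trivially $0$, and for $q=1$ you would otherwise be conditioning on a probability-zero event), whereas you spell out two things the paper leaves implicit: the disjointness of the $T_v$ and the arithmetic for $q = \sqrt{2}-1$.
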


\begin{proof}
    Assume without loss of generality that $0 < q < 1$. Let $v \in V$. Since $x$ is a feasible solution of \eqref{eq:knapsack-lp}, we have $x \in K(v)$. Each neighbor $u$ is independently a source with probability $q$. Conditioned on $v$ being a sink and $u$ being a source, each edge $e = uv \in \delta(v)$ belongs to $S_u$ with probability $x_e$. Since $G$ is simple, distinct edges incident to $v$ correspond to distinct neighbors. Since the corresponding sets $S_u$ are sampled independently across sources, conditioned on $v$ being a sink, each edge $e \in \delta(v)$ is independently included in $R_v$ with probability $qx_e$. In other words, conditioned on $v$ being a sink, the random set $R_v$ is distributed as $R(qx|_{\delta(v)})$. Since $T_v$ is a maximum weight subset of $R_v$ with $d(T_v) \leq b(v)$, \Cref{thm:knapsack} and the contention-resolution characterization in \cref{sec:intro} imply that
    $$ \EE[w(T_v) \mid v \text{ is a sink}] \geq \frac{q}{1 + q} \sum_{e \in \delta(v)} w(e) x_e. $$
    
    Therefore,
    \begin{align*}
        \EE[w(M)] &= \EE\left[\sum_{v \in V \text{ sink}} w(T_v)\right] \\
        &= \sum_{v \in V} \Pr[v \text{ is a sink}] \cdot \EE[w(T_v) \mid v \text{ is a sink}] \\
        &\geq \sum_{v \in V} (1 - q) \cdot \frac{q}{1 + q} \sum_{e \in \delta(v)} w(e) x_e \\
        &= \frac{2q(1 - q)}{1 + q} \sum_{e \in E} w(e) x_e.
    \end{align*}
    This completes the proof.
\end{proof}

\Cref{lem:simple-feasible,lem:simple-weight} imply that the integrality gap of \eqref{eq:knapsack-lp} on simple graphs is at most $1/(6 - 4\sqrt{2}) = 3/2 + \sqrt{2} \approx 2.914$.

\subsection{Bipartite Simple Graphs}

Next, we consider the case where $G = (U \sqcup W, E)$ is simple and bipartite. Here, the bipartition provides a natural source-sink structure, where we \emph{deterministically} label each vertex in $U$ as a \emph{source} and each vertex in $W$ as a \emph{sink}, thereby avoiding the loss from random labeling. The remainder of the rounding algorithm is identical to \Cref{alg:simple}. We formally state the algorithm in \Cref{alg:bip-simple}.

\begin{algorithm}[ht]
    \caption{A rounding algorithm for \textsc{Demand Matching} on bipartite simple graphs.}
    \label{alg:bip-simple}
    \KwIn{a bipartite simple graph $G = (U \sqcup W, E)$, $b : U \sqcup W \to \ZZ_+$, $d : E \to \NN$, $w : E \to \ZZ_+$, and a feasible solution $x$ of \eqref{eq:knapsack-lp}.}
    \KwOut{a subset $M \subseteq E$.}
    \ForEach{$u \in U$ independently}{
        Let $\mathcal D_u$ be a distribution of $S \subseteq \delta(u)$ with $d(S) \leq b(u)$ and $\Pr[e \in S] = x_e$ for $e \in \delta(u)$. \\
        Sample $S_u \sim \mathcal D_u$.
    }
    \ForEach{$v \in W$}{
        Let $R_v := \{ e = uv \in \delta(v) : e \in S_u \}$. \\
        Find a maximum weight $T_v \subseteq R_v$ such that $d(T_v) \leq b(v)$.
    }
    \Return{$M := \bigcup_{v \in W} T_v$}
\end{algorithm}

We analyze \Cref{alg:bip-simple} on an input $(G = (U \sqcup W, E), b, d, w, x)$ where $G$ is simple and bipartite. The feasibility of the output follows from the same argument as in the proof of \Cref{lem:simple-feasible}.

\begin{lemma} \label{lem:bip-simple-feasible}
    Let $M \subseteq E$ be the output of \Cref{alg:bip-simple}. Then $d(M \cap \delta(v)) \leq b(v)$ for all $v \in U \sqcup W$.
\end{lemma}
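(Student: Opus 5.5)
The plan is to repeat the argument of \Cref{lem:simple-feasible}, separating the vertices by side of the bipartition. Let $M = \bigcup_{v \in W} T_v$ be the output of \Cref{alg:bip-simple}. Every edge of $G$ has exactly one endpoint in $U$ and one in $W$. So for each vertex I only need to identify which sets can contribute edges of $M$ incident to it, and then compare with a set that is already known to be feasible there.

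\emph{Sinks.} Take $v \in W$. Every edge $e \in M \cap \delta(v)$ lies in some $T_{v'}$ with $v' \in W$. By construction $T_{v'} \subseteq R_{v'} \subseteq \delta(v')$, so $e$ is incident to both $v$ and $v'$. Since $W$ is an independent set and $G$ has no self-loops, this forces $v' = v$. Hence $M \cap \delta(v) = T_v$, and the algorithm chose $T_v$ with $d(T_v) \leq b(v)$.

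\emph{Sources.} Take $u \in U$ and $e \in M \cap \delta(u)$. Then $e = uv$ for some $v \in W$ with $e \in T_v \subseteq R_v$. By the definition of $R_v$, this means $e \in S_u$. Therefore $M \cap \delta(u) \subseteq S_u$. Because $d$ is nonnegative and $S_u$ lies in the support of $\mathcal D_u$, we get $d(M \cap \delta(u)) \leq d(S_u) \leq b(u)$.

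There is no real obstacle here. The only points needing care are the bookkeeping that $R_v$ contains only edges proposed by $v$'s neighbor $u$ through $S_u$, and that the sinks form an independent set. Simplicity of $G$ is not needed for feasibility; it matters only for the weight analysis.
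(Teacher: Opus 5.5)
Your proof is correct and matches the paper's argument, which simply defers to the proof of \Cref{lem:simple-feasible}: $M \cap \delta(u) \subseteq S_u$ at each source and $M \cap \delta(v) = T_v$ at each sink. Your use of the bipartition to justify these two set relations fills in bookkeeping that the paper leaves implicit.
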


We now analyze the weight of the output.

\begin{lemma} \label{lem:bip-simple-weight}
    Let $M \subseteq E$ be the output of \Cref{alg:bip-simple}. Then
    $$ \EE[w(M)] \geq \frac{1}{2} \cdot w^{\T} x. $$
\end{lemma}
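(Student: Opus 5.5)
The plan is to mirror the proof of \Cref{lem:simple-weight} with $q = 1$ and no random labeling. The argument proceeds sink by sink: we show that for each sink $v \in W$ the random set $R_v$ has exactly the distribution $R(x|_{\delta(v)})$, then apply \Cref{thm:knapsack} at $v$, and finally sum over $W$. Bipartiteness ensures that every edge is counted exactly once.

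\textbf{Distribution of $R_v$.} Fix $v \in W$. Every edge $e \in \delta(v)$ has the form $e = uv$ with $u \in U$, since $G$ is bipartite. Because $G$ is simple, distinct edges of $\delta(v)$ correspond to distinct sources $u$. The event $\{e \in R_v\}$ coincides with $\{e \in S_u\}$, which has probability $x_e$ by the choice of $\mathcal D_u$. The sets $S_u$ are sampled independently across $u \in U$, so these events are mutually independent for distinct edges of $\delta(v)$. Hence $R_v \sim R(1 \cdot x|_{\delta(v)})$.

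\textbf{Applying \Cref{thm:knapsack} at $v$.} Since $x \in K(v)$, the restriction $x|_{\delta(v)}$ lies in the integral knapsack polytope on $\delta(v)$ with sizes $d$ and capacity $b(v)$. (Coordinates outside $\delta(v)$ do not affect the constraint defining $K(v)$, so the restriction is a convex combination of feasible subsets of $\delta(v)$.) Apply \Cref{thm:knapsack} with $q = 1$, in its equivalent form from \cite{chekuri2014submodular} quoted in \cref{sec:intro}. Since $T_v$ is a maximum weight feasible subset of $R_v$, this gives
\[
\EE[w(T_v)] \geq \frac{1}{2} \sum_{e \in \delta(v)} w(e) x_e .
\]

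\textbf{Summing over sinks.} The sets $T_v$ for distinct $v \in W$ are disjoint, since $W$ is an independent set. Therefore $w(M) = \sum_{v \in W} w(T_v)$. Since every edge has exactly one endpoint in $W$, the sets $\delta(v)$ for $v \in W$ partition $E$. Using linearity of expectation,
\[
\EE[w(M)] \geq \frac{1}{2} \sum_{v \in W} \sum_{e \in \delta(v)} w(e) x_e = \frac{1}{2} w^{\T} x .
\]

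The only step requiring care is the independence claim for $R_v$, and it rests on simplicity of $G$. The rest is direct application of earlier results, so I expect no real obstacle.
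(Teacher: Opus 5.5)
Your proposal is correct and follows essentially the same route as the paper: it uses simplicity and independence across sources to get $R_v \sim R(x|_{\delta(v)})$, applies \Cref{thm:knapsack} with $q = 1$ at each sink, and sums over $W$ by linearity. Your extra remarks, that the restriction $x|_{\delta(v)}$ lies in the integral knapsack polytope on $\delta(v)$ and that the sets $\delta(v)$ for $v \in W$ partition $E$, are details the paper leaves implicit.
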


\begin{proof}
    Let $v \in W$. Since $x$ is a feasible solution of \eqref{eq:knapsack-lp}, we have $x \in K(v)$. Each edge $e = uv \in \delta(v)$ belongs to $S_u$ with probability $x_e$. Since $G$ is simple, distinct edges incident to $v$ correspond to distinct neighbors. Since the corresponding sets $S_u$ are sampled independently across sources, each edge $e \in \delta(v)$ is independently included in $R_v$ with probability $x_e$. In other words, the random set $R_v$ is distributed as $R(x|_{\delta(v)})$. Since $T_v$ is a maximum weight subset of $R_v$ with $d(T_v) \leq b(v)$, \Cref{thm:knapsack} (with $q = 1$) and the contention-resolution characterization in \cref{sec:intro} imply that
    $$ \EE[w(T_v)] \geq \frac{1}{2} \sum_{e \in \delta(v)} w(e) x_e. $$

    Therefore,
    $$ \EE[w(M)] = \EE\left[\sum_{v \in W} w(T_v)\right] = \sum_{v \in W} \EE[w(T_v)] \geq \sum_{v \in W} \frac{1}{2} \sum_{e \in \delta(v)} w(e) x_e = \frac{1}{2} \sum_{e \in E} w(e) x_e. $$
    This completes the proof.
\end{proof}

\Cref{lem:bip-simple-feasible,lem:bip-simple-weight} imply that the integrality gap of \eqref{eq:knapsack-lp} on bipartite simple graphs is at most $2$.

\subsection{Polynomial-Time Implementation} \label{subsec:polytime}

We now show how to obtain polynomial-time approximation algorithms from \Cref{alg:simple,alg:bip-simple}. Three steps in the overall implementation cannot be performed exactly in polynomial time:
\begin{enumerate}[label=(\alph*), itemsep=0pt]
    \item computing an optimal solution of \eqref{eq:knapsack-lp}; \label{item:opt}
    \item decomposing $x|_{\delta(u)}$ into feasible knapsack solutions at each source $u$; \label{item:decompose}
    \item computing a maximum weight feasible subset of $R_v$ at each sink $v$. \label{item:knapsack-fptas}
\end{enumerate}

For \ref{item:opt}, we use the FPTAS in \Cref{lem:fptas} for a multiple-choice generalization of \eqref{eq:knapsack-lp}, whose formulation coincides with \eqref{eq:knapsack-lp} on simple graphs. For every $\gamma > 0$, we can compute, in time polynomial in $1/\gamma$ and the input length, a feasible solution $x$ of \eqref{eq:knapsack-lp} whose objective value is at least $(1 - \gamma)$ times the LP optimum. A PTAS for solving \eqref{eq:knapsack-lp} was previous obtained by \citet{jozefiak2023knapsack}.

For \ref{item:decompose}, for each source $u$, we apply the decomposition technique of \citet{lavi2011truthful} (which strengthens the decomposition technique of \citet{carr2002randomized}), together with an FPTAS for the \textsc{$0$-$1$ Knapsack} problem \citep[e.g.,][]{ibarra1975fast,lawler1979fast,magazine1981fully}, to obtain, in time polynomial in $1/\eta$ and the input length, a polynomial-size distribution $\mathcal D_u$ supported on subsets $S_u^1, \ldots, S_u^{m_u} \subseteq \delta(u)$ such that
\begin{align*}
    d(S_u^i) \leq b(u) && \forall i \in [m_u],
\end{align*}
and
\begin{align*}
    \Pr_{S_u \sim \mathcal D_u}[e \in S_u] = x_e(1 - \eta) && \forall e \in \delta(u),
\end{align*}
for some $\eta > 0$ to be determined later.

For \ref{item:knapsack-fptas}, for each sink $v$, we apply an FPTAS for the \textsc{$0$-$1$ Knapsack} problem to find a subset $T_v \subseteq R_v$ such that $d(T_v) \leq b(v)$ and
$$ w(T_v) \geq (1 - \zeta) \cdot \max\{ w(T) : T \subseteq R_v, d(T) \leq b(v) \}, $$
for some $\zeta > 0$ to be determined later.

Incorporating the modifications of \ref{item:opt}, \ref{item:decompose} and \ref{item:knapsack-fptas} into \Cref{alg:simple}, and applying the analysis of \Cref{lem:simple-weight} to $(1 - \eta) x$, we obtain
$$ \EE[w(M)] \geq w^{\T} x \cdot \frac{2q(1-q)}{1+q} \cdot (1 - \eta) \cdot (1 - \zeta) \geq \LPOPT \cdot \frac{2q(1-q)}{1+q} \cdot (1 - \eta) \cdot (1 - \zeta) \cdot (1 - \gamma), $$
where $M \subseteq E$ is the output of the modified algorithm, and $\LPOPT$ denotes the optimum of \eqref{eq:knapsack-lp}. Setting $q = \sqrt{2} - 1$ and choosing $\gamma, \eta, \zeta$ to be sufficiently small constant multiples of $\varepsilon$ gives a fully polynomial-time $(3/2 + \sqrt{2} + \varepsilon)$-approximation algorithm for simple graphs.\footnote{In the finite precision implementation, we use a sufficiently accurate rational approximation to $q = \sqrt{2} - 1$ and absorb the resulting loss into the $\varepsilon$ term. The integrality gap analysis uses the exact value of $q$.} Similarly, the same modifications applied to \Cref{alg:bip-simple} yield a fully polynomial-time $(2 + \varepsilon)$-approximation algorithm for bipartite simple graphs.

%% file: multi.tex
\section{Rounding Algorithms for Multigraphs} \label{sec:multi}

In this section, we extend the rounding algorithms from \cref{sec:simple} to multigraphs and complete the proof of \Cref{thm:main}. The main difficulty is that the independence property used for simple graphs no longer holds. Indeed, if there are parallel edges between a pair of vertices $u$ and $v$, these edges are sampled jointly as part of the same random feasible set $S_u$, and may therefore be correlated. In other words, conditioned on a vertex $v$ being a sink, the edges incident to $v$ need not be sampled independently.

We overcome this difficulty as follows. First, for each pair of vertices $u$ and $v$, we bundle each feasible nonempty subset of parallel edges into an auxiliary edge. Hence, sampling correlated parallel edges becomes choosing at most one bundle among alternative feasible bundles for each pair. Moreover, conditioned on a vertex being a sink, the bundles it receives from different neighbors are independent. However, such an auxiliary graph may have exponentially many auxiliary edges. To obtain a polynomial-size representation, we use the notion of $\xi$-Pareto families to keep only polynomially many bundles for each pair, while allowing an arbitrarily small loss in the objective. Furthermore, we prove a multiple-choice extension of the contention resolution guarantee for the integral knapsack polytope from \cref{sec:knapsack}, and use it to perform essentially the same rounding procedure as in the simple graph case.

\subsection{Bundling Parallel Edges and Pareto Families} \label{subsec:pareto}

Let $(G = (V, E), b, d, w)$ be an instance of the \textsc{Demand Matching} problem. For each $u, v \in V$, we use $F_{u, v} \subseteq E$ to denote the subset of edges with endpoints $u$ and $v$, and define
$$ \mathcal F_{u, v} := \left\{ F \subseteq F_{u, v} : F \neq \emptyset, d(F) \leq \min\{ b(u), b(v) \} \right\}. $$
We define an auxiliary graph $\mathcal G = (V, \mathcal E)$, where
$$ \mathcal E := \bigcup_{u, v \in V} \mathcal F_{u, v}. $$
Each auxiliary edge $F \in \mathcal E$ represents a nonempty subset of parallel edges of $G$ that is feasible at both endpoints, which we call a \emph{bundle}, with demand and weight
\begin{gather*}
    d(F) := \sum_{e \in F} d(e), \qquad w(F) := \sum_{e \in F} w(e).
\end{gather*}
Then the original \textsc{Demand Matching} problem on $(G, b, d, w)$ is equivalent to finding a demand matching in $\mathcal G$ subject to the additional requirement that at most one auxiliary edge is selected between each pair of vertices. \Cref{fig:bundling} illustrates the construction of the auxiliary graph.

\begin{figure}[ht]
\centering

\begin{subfigure}[t]{0.475\textwidth}
\centering
\begin{tikzpicture}[thick]
    \node[circle, draw, minimum size=8mm] (u) at (0,0) {$u$};
    \node[circle, draw, minimum size=8mm] (v) at (3,0) {$v$};

    \draw[blue, very thick, bend left=35]  (u) to node[above, blue] {$e_1$} (v);
    \draw[blue, very thick]                (u) to node[above, blue] {$e_2$} (v);
    \draw[bend right=35]                   (u) to node[below] {$e_3$} (v);
\end{tikzpicture}
\caption{Original parallel edges. Here, \(b(u)=b(v)=5\), and
\(d(e_1)=2\), \(d(e_2)=3\), and \(d(e_3)=4\). The highlighted edges \(e_1\) and \(e_2\) form a feasible subset.}
\label{fig:bundle-original}
\end{subfigure}
\hfill
\begin{subfigure}[t]{0.475\textwidth}
\centering
\begin{tikzpicture}[thick]
    \node[circle, draw, minimum size=8mm] (u) at (0,0) {$u$};
    \node[circle, draw, minimum size=8mm] (v) at (4,0) {$v$};

    \draw[bend left=55]  (u) to node[above, font=\scriptsize] {$\{e_1\}$} (v);
    \draw[bend left=18]  (u) to node[above, font=\scriptsize] {$\{e_2\}$} (v);
    \draw[bend right=18] (u) to node[above, font=\scriptsize] {$\{e_3\}$} (v);
    \draw[blue, very thick, bend right=55] (u) to node[above, font=\scriptsize, blue] {$\{e_1,e_2\}$} (v);
\end{tikzpicture}
\caption{Auxiliary edges corresponding to the feasible nonempty subsets of \(F_{u,v}\). Their demands are \(2\), \(3\), \(4\), and \(5\), respectively. The highlighted auxiliary edge represents the feasible subset \(\{e_1,e_2\}\).}
\label{fig:bundle-aux}
\end{subfigure}

\caption{Bundling parallel edges between \(u\) and \(v\). Each feasible subset of parallel edges in the original graph is represented by one auxiliary edge; at most one such auxiliary edge may be selected between \(u\) and \(v\).}
\label{fig:bundling}
\end{figure}

However, the auxiliary graph $\mathcal G$ may have exponentially many edges. To obtain a polynomial-size representation, we replace each family $\mathcal F_{u, v}$ with a polynomial-size approximate representation, which we call a $\xi$-Pareto family. Formally, given a family $\mathcal F \subseteq 2^N$ on a finite ground set $N$, $d : N \to \NN$, $w : N \to \ZZ_+$, and $\xi > 0$, a subfamily $\mathcal F^\xi \subseteq \mathcal F$ is a \emph{$\xi$-Pareto family} of $\mathcal F$ if, for every $S \in \mathcal F$, there exists $S' \in \mathcal F^\xi$ such that
\begin{gather*}
    d(S') \leq d(S), \qquad w(S') \geq (1 - \xi) \cdot w(S).
\end{gather*}
In other words, each subset $S \in \mathcal F$ can be replaced by a representative $S'$ in the $\xi$-Pareto family $\mathcal F^\xi$ with no larger demand and with at most a $\xi$-fractional loss in weight.

\citet{herzel2021one} showed that, for any $\mathcal F \subseteq 2^N$ and $\xi > 0$, there exists a $\xi$-Pareto family of $\mathcal F$ whose cardinality is polynomial in $1/\xi$ and the maximum encoding length of $d(S)$ and $w(S)$ over $S \in \mathcal F$. Moreover, they characterized when such a $\xi$-Pareto family can be constructed efficiently: a polynomial-size $\xi$-Pareto family can be computed in polynomial time if and only if an associated auxiliary problem can be solved in polynomial time.\footnote{Unless noted otherwise, polynomial bounds are with respect to $1/\xi$ and the maximum encoding length of $d(S)$ and $w(S)$ over $S \in \mathcal F$ when referring to a $\xi$-Pareto family of $\mathcal F$.} We show that this auxiliary problem is polynomial-time solvable in our setting, and hence that a polynomial-size $\xi$-Pareto family can be computed in polynomial time for any family of bundles. We defer the proof to \cref{apx:pareto}.

\begin{lemma} \label{lem:pareto}
    Let $N$ be a finite ground set. Let $d : N \to \NN$ and $w : N \to \ZZ_+$. Let $B \in \ZZ_+$. Let
    $$ \mathcal F := \{ S \subseteq N : S \neq \emptyset, d(S) \leq B \}. $$
    Let $\xi > 0$. Then a $\xi$-Pareto family of $\mathcal F$ of polynomial cardinality can be computed in polynomial time, where the polynomial bounds are with respect to $1/\xi$ and the encoding length of $(N, d, w, B)$.
\end{lemma}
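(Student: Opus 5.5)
We will compute the $\xi$-Pareto family directly, using a rounding-and-dynamic-programming argument in the style of the classical knapsack FPTAS, instead of going through the general characterization of \citet{herzel2021one}. Let $n := |N|$, $W := \max_{e \in N} w(e)$, and discard every element $e$ with $d(e) > B$, since no member of $\mathcal F$ contains it.

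The key observation is that every $S \in \mathcal F$ with $w(S) > 0$ has weight in $[1, nW]$. We therefore split the positive weight range into geometric scales $[2^{j}, 2^{j+1})$ for $j = 0, 1, \ldots, \lceil \log_2(nW) \rceil$; there are polynomially many such scales. The sets with $w(S) = 0$ are handled by a single singleton $\{e\}$ with $d(e) \leq B$ chosen to minimize $d(e)$. This singleton has demand at most $d(S)$ for every nonempty $S$ (as $d \geq 1$), so it dominates every weight-zero set.

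For a fixed scale $j$, we set $\mu_j := \xi 2^{j} / n$ and define rounded weights $\tilde w(e) := \lfloor w(e)/\mu_j \rfloor$, ignoring elements with $w(e) \geq 2^{j+1}$. For each target value $t \in \{0, 1, \ldots, \lceil 2n/\xi \rceil\}$ we run the standard min-demand dynamic program. It returns a nonempty set $S_{j,t}$ with $\tilde w(S_{j,t}) \geq t$ that minimizes $d$ among such sets; the DP over elements also tracks a nonemptiness flag. We keep $S_{j,t}$ only if $d(S_{j,t}) \leq B$. The DP has $O(n \cdot n/\xi)$ states, so this step runs in time polynomial in $n$, $1/\xi$, and $\log W$.

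To see that the output $\mathcal F^\xi := \bigcup_{j,t} \{S_{j,t}\}$ is a $\xi$-Pareto family, take any $S \in \mathcal F$ with $w(S) \in [2^j, 2^{j+1})$ and set $t := \tilde w(S)$. Every element of $S$ has weight below $2^{j+1}$, so $t \leq 2n/\xi$. The DP optimum then satisfies $d(S_{j,t}) \leq d(S) \leq B$, which means $S_{j,t}$ is kept. Its weight satisfies
\[
w(S_{j,t}) \geq \mu_j \tilde w(S_{j,t}) \geq \mu_j t \geq w(S) - n\mu_j = w(S) - \xi 2^j \geq (1-\xi) w(S).
\]

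The cardinality of $\mathcal F^\xi$ is $O((n/\xi)\log(nW))$, and the total running time is polynomial in $1/\xi$ and the encoding length of $(N, d, w, B)$. The step needing the most care is the additive rounding error. It must be controlled relative to $w(S)$ and not relative to the global maximum, and this is exactly why we work scale by scale and cap element weights within each scale. The remaining details, namely enforcing nonemptiness and ensuring $\mathcal F^\xi \subseteq \mathcal F$ through the check $d \leq B$, are routine.
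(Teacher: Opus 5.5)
Your argument is correct, but it takes a different route from the paper.

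\textbf{The paper's route.} The paper does not construct the family directly. It invokes the characterization of \citet{herzel2021one}, which reduces building a polynomial-size $\xi$-Pareto family to an approximate \textsc{Threshold Knapsack} oracle: given $\Psi$, return $S \in \mathcal F$ with $d(S) \leq d^\star(\Psi)$ and $w(S) \geq (1-\varphi)\Psi$, or certify that no such $S$ exists. It implements that oracle with a single DP in which item weights are capped at $\Psi$ and rounded to multiples of $\varphi\Psi/m$.

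\textbf{Your route.} You inline the whole construction, and each piece has a counterpart in the paper's argument.
\begin{itemize}
\item Your dyadic scales $[2^j, 2^{j+1})$ play the role of the threshold values $\Psi$.
\item Discarding items with $w(e) \geq 2^{j+1}$ plays the role of capping weights at $\Psi$. Discarding is legitimate for you because the scale is fixed by $w(S)$ itself, so no element of $S$ is ever discarded. The paper must cap rather than discard, since $d^\star(\Psi)$ ranges over sets that may contain arbitrarily heavy items.
\item The additive rounding loss is bounded by $n\mu_j = \xi 2^j \leq \xi\, w(S)$, exactly as you wrote.
\item A single min-demand DP per scale, indexed by the capped rounded weight with a nonemptiness flag, already gives $S_{j,t}$ for every target $t$ at once.
\end{itemize}

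\textbf{Comparison.} The paper's approach is more modular: it only has to solve a single-threshold problem and inherits both the Pareto property and the size bound from the external theorem. Yours is self-contained and avoids the black box. It also gives an explicit algorithm and an explicit size bound of $O((n/\xi)\log(nW))$, which is all the lemma requires.

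\textbf{One wording fix.} The weight-zero singleton $\{e_0\}$ dominates each $S \in \mathcal F$ for a different reason than $d \geq 1$. Since $S$ is nonempty, it contains some $f$ with $d(f) \leq d(S) \leq B$. So $f$ survives the preprocessing, and $d(e_0) \leq d(f) \leq d(S)$.
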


Fix $\xi > 0$. We define a \emph{$\xi$-Pareto auxiliary graph} $\mathcal G^\xi = (V, \mathcal E^\xi)$. For each unordered pair of vertices $u, v \in V$, let $\mathcal F_{u, v}^\xi$ be a polynomial-size $\xi$-Pareto family of $\mathcal F_{u, v}$. Define\footnote{By slight abuse of notation, we use the indexing notation $u, v \in V$ to denote an unordered pair of vertices.}
$$ \mathcal E^\xi := \bigcup_{u, v \in V} \mathcal F_{u, v}^\xi. $$
For each $v \in V$, we define\footnote{Given a graph $G = (V, E)$ and $v \in V$, we use $N(v) \subseteq V$ to denote the set of neighbors of $v$ in $G$.}
$$ \delta^\xi(v) := \bigcup_{u \in N(v)} \mathcal F_{u, v}^\xi. $$
We say that a subset $\mathcal M \subseteq \mathcal E^\xi$ is a \emph{multiple-choice demand matching} in $\mathcal G^\xi$ if
\begin{align*}
    d\left(\mathcal M \cap \delta^\xi(v)\right) \leq b(v) && \forall v \in V,
\end{align*}
and
\begin{align*}
    \left|\mathcal M \cap \mathcal F_{u, v}^\xi\right| \leq 1 && \forall u, v \in V.
\end{align*}
Each multiple-choice demand matching $\mathcal M$ in $\mathcal G^\xi$ corresponds to a feasible demand matching $M$ in the original \textsc{Demand Matching} instance $(G, b, d, w)$, obtained by replacing each auxiliary edge $F \in \mathcal M$ with the set of original edges it represents, i.e.,
$$ M := \bigcup_{F \in \mathcal M} F. $$
Conversely, take an optimal demand matching $M^\star$ in $G$. For each unordered pair of vertices $u, v \in V$ with $F_{u, v}^\star := M^\star \cap F_{u, v} \neq \emptyset$, choose a $\xi$-Pareto representative $F_{u, v}' \in \mathcal F_{u, v}^\xi$ for $F_{u, v}^\star$, which satisfies $d(F_{u, v}') \leq d(F_{u, v}^\star)$ and $w(F_{u, v}') \geq (1 - \xi) \cdot w(F_{u, v}^\star)$. Hence,
$$ \mathcal M' := \left\{ F_{u, v}' : u, v \in V, F_{u, v}^\star \neq \emptyset \right\} $$
is a feasible multiple-choice demand matching in $\mathcal G^\xi$ with
$$ w(\mathcal M') \geq (1 - \xi) \cdot w(M^\star). $$
It follows that a maximum weight multiple-choice demand matching in $\mathcal G^\xi$ has weight at least $(1 - \xi)$ times the optimum of the original instance. In the next subsection, we define the \textsc{Multiple-Choice Demand Matching} problem more generally.

\subsection{Multiple-Choice Demand Matching}

The preceding construction motivates the following generalization of the \textsc{Demand Matching} problem. Let $G = (V, E)$ be a (multi)graph with \emph{vertex capacities} $b : V \to \ZZ_+$, edge demands $d : E \to \NN$, and edge weights $w : E \to \ZZ_+$. For each $u, v \in V$, let $F_{u, v} \subseteq E$ be the subset of edges with endpoints $u$ and $v$. Then $\Pi_v := \{ F_{u, v} : u \in N(v) \}$ is a partition of $\delta(v)$ for each $v \in V$. We say that a subset $M \subseteq E$ is a \emph{multiple-choice demand matching} if
\begin{align*}
    d(M \cap \delta(v)) \leq b(v) && \forall v \in V,
\end{align*}
and
\begin{align*}
    |M \cap F_{u, v}| \leq 1 && \forall u, v \in V.
\end{align*}
The \textsc{Multiple-Choice Demand Matching} problem is to find a multiple-choice demand matching $M$ such that $w(M)$ is maximized. Without loss of generality, we assume that $G$ does not contain self-loops. The $\xi$-Pareto auxiliary instance defined in the preceding subsection is a special case of the \textsc{Multiple-Choice Demand Matching} problem, where the input graph is $G^\xi$.

For each $v \in V$, we define
\begin{align*}
    \mathcal K_{\mathsf{mc}}(v) &:= \left\{ S \subseteq E : d(S \cap \delta(v)) \leq b(v), |S \cap \pi| \leq 1 \;\forall \pi \in \Pi_v \right\}, \\
    K_{\mathsf{mc}}(v) &:= \Conv\left\{ \chi^S : S \in \mathcal K_{\mathsf{mc}}(v) \right\}.
\end{align*}
Analogous to the knapsack intersection LP, we define the following LP relaxation of the \textsc{Multiple-Choice Demand Matching} problem, which we call the \emph{multiple-choice knapsack intersection LP}:
\begin{alignat}{4}
    \maximize \qquad && \sum_{e \in E} x_e w(e) & \tag{MC-Knapsack-LP} \label{eq:mc-knapsack-lp} \\
    \subjectto \qquad && x \in K_{\mathsf{mc}}(v) && \qquad \forall v \in V. \notag
\end{alignat}
We show that there exists an FPTAS for optimization over \eqref{eq:mc-knapsack-lp}. We defer its proof to \cref{apx:fptas}.

\begin{lemma} \label{lem:fptas}
    For every $\varepsilon > 0$, one can compute, in time polynomial in $1/\varepsilon$ and the input length, a feasible solution $x$ of \eqref{eq:mc-knapsack-lp} such that $\sum_{e \in E} x_e w(e)$ is at least $(1 - \varepsilon)$ times the optimum of \eqref{eq:mc-knapsack-lp}. In other words, there exists an FPTAS for solving \eqref{eq:mc-knapsack-lp}.
\end{lemma}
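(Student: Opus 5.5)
Our plan is to follow the approach of \citet{pritchard2010lp} and \citet{jozefiak2023knapsack}: solve the LP approximately through its dual with an approximate separation oracle, and use a multiple-choice knapsack FPTAS in place of exact separation.

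First, we rewrite \eqref{eq:mc-knapsack-lp} in configuration form. For each $v$, introduce variables $\lambda_{v,S}\ge 0$ for $S\in\mathcal K_{\mathsf{mc}}(v)$ with $S\subseteq\delta(v)$, together with the constraints $\sum_S \lambda_{v,S}\le 1$ and $x_e\le \sum_{S\ni e}\lambda_{v,S}$ for each endpoint $v$ of $e$. Because the family is downward-closed, this relaxation has the same optimum as \eqref{eq:mc-knapsack-lp}, and any feasible solution of it yields a feasible $x$ after downward closure. The dual has variables $\mu_v\ge 0$ and $y_{e,v}\ge 0$, with constraints $y_{e,u}+y_{e,v}\ge w(e)$ for each $e=uv$ and $\sum_{e\in S} y_{e,v}\le \mu_v$ for every $S\in\mathcal K_{\mathsf{mc}}(v)$. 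The objective is to minimize $\sum_v\mu_v$. Separating this dual amounts to solving a multiple-choice knapsack problem at $v$ with profits $y_{e,v}$, classes $\Pi_v$ and capacity $b(v)$.

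Multiple-choice knapsack admits an FPTAS: round the profits to multiples of $\varepsilon P/n$ and run a dynamic program over classes indexed by profit. We use this FPTAS as a $(1-\varepsilon)$-approximate separation oracle. The standard approximate-separation argument then applies. Run the ellipsoid method on the dual with objective bound $t$. Whenever the oracle finds a violated constraint, return it. Otherwise the point is feasible for the dual with $\mu$ scaled by $1/(1-\varepsilon)$. Binary search on $t$ gives a value $t^*$ such that the ellipsoid run at $t^*$ certifies infeasibility using only the polynomially many sets $S$ the oracle generated. Meanwhile, $t^*/(1-\varepsilon)$ upper-bounds the optimum of \eqref{eq:mc-knapsack-lp}, up to the binary-search precision.

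Next, we restrict the primal to the generated columns. This restricted LP has polynomial size and is solved exactly, and by LP duality its value is at least roughly $t^*\ge(1-\varepsilon)\cdot\mathrm{OPT}$. Every column it uses is an exactly feasible multiple-choice set, so the resulting $x$, taken as the marginals of the $\lambda$'s after truncation, lies in every $K_{\mathsf{mc}}(v)$.

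The main obstacle is obtaining a running time polynomial in $1/\varepsilon$, rather than merely polynomial for fixed $\varepsilon$. The ellipsoid method runs in time polynomial in the encoding length, and the FPTAS is polynomial in $1/\varepsilon$, so the overall bound is polynomial in both, provided we control the bit complexity of the dual points $y$ passed to the oracle. Ellipsoid keeps these polynomially bounded, and the binary search contributes $O(\log(nW/\varepsilon))$ iterations. If the ellipsoid bookkeeping turns out to be delicate, we would instead try a multiplicative-weights / Garg–Könemann packing framework. That framework uses the same approximate oracle and directly gives a $(1-\varepsilon)$-approximation in $\mathrm{poly}(n,1/\varepsilon)$ oracle calls.
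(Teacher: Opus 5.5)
Your proposal is correct and follows essentially the same route as the paper: rewrite the LP in configuration form, use a multiple-choice knapsack FPTAS as a $\beta$-approximate separation oracle for the dual, run ellipsoid with binary search, and solve the primal restricted to the generated columns (the paper's \Cref{lem:approx}). The differences are cosmetic. The paper uses the equality form with free duals $\nu_{v,e}$ and must discard items with $\nu_{v,e}\le 0$ before calling the FPTAS, whereas your inequality form gives nonnegative duals directly. Your ``roughly'' is made exact, as in the paper, by taking $\beta = 1-\varepsilon/2$, $\kappa=\varepsilon/2$, and binary-search precision $\tau \le \kappa T^\star/\beta$.
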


\subsection{Multiple-Choice CR Scheme for the Integral Knapsack Polytope} \label{subsec:mc-knapsack}

We now extend the contention resolution result from \cref{sec:knapsack} to the multiple-choice sampling structure arising in the \textsc{Multiple-Choice Demand Matching} problem. Recall that we have a partition of $\delta(v)$ for each vertex $v \in V$, and the output $M$ chooses at most one edge from each part in the partition. Therefore, we define the following multiple-choice sampling structure for this setting. Let $N$ be a finite ground set. Let $\Pi$ be a partition of $N$. Instead of sampling each item independently, we independently process each part $\pi \in \Pi$. Formally, given $y \in [0, 1]^N$ such that $y(\pi) \leq 1$ for all $\pi \in \Pi$, let $R_\Pi(y) \subseteq N$ denote the random subset obtained by, independently for each $\pi \in \Pi$, choosing either one item $e \in \pi$ with probability $y_e$, or no item in $\pi$ with probability $1 - y(\pi)$. When every part of $\Pi$ is a singleton, this reduces to the independent sampling process $R(y)$ considered in \cref{sec:knapsack}.

We show that the same expected weight guarantee in \Cref{thm:knapsack} continues to hold under this multiple-choice sampling structure. We remark that $x$ in the statement of the theorem below only needs to belong to the integral knapsack polytope. The multiple-choice restriction applies to the sampled set $R_\Pi(qx)$, while the feasible sets in a convex decomposition of $x$ may contain several items from the same part of $\Pi$.

\begin{theorem} \label{thm:mc-knapsack}
    Let $d_1, \ldots, d_n \in \NN$. Let $B \in \ZZ_+$. Let $w \in \RR_+^n$. Let $\Pi$ be a partition of $[n]$. Let $x \in P_{\mathcal K}^{\mathsf{int}}$ and $q \in [0, 1]$ be such that $qx(\pi) \leq 1$ for all $\pi \in \Pi$. Then
    $$ \EE[\max\{ w(S) : S \subseteq R_\Pi(qx), d(S) \leq B \}] \geq \frac{q}{1 + q} \cdot w^{\T} x. $$
\end{theorem}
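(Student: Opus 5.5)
The plan is to rerun the proof of \Cref{thm:knapsack} essentially verbatim. The only step of that proof that used independent sampling is the identity $qx_e\,\EE_R[\Delta_e(R \vee e, B)] = \EE_R[\mathds 1[e \in R]\,\Delta_e(R,B)]$, and I will repair that step by conditioning on the part containing $e$ instead of on $e$ alone. Write $R := R_\Pi(qx)$, and for each $e \in [n]$ let $\pi(e) \in \Pi$ be the part containing it. Define $R_{-e} := R \setminus \pi(e)$, the portion of the sample coming from the other parts. By construction of $R_\Pi$, the set $R_{-e}$ is independent of $R \cap \pi(e)$, and $\Pr[e \in R] = qx_e$.

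The first step is unchanged. Since $x \in P_{\mathcal K}^{\mathsf{int}}$, I write $x$ as the marginals of a distribution $\mathcal D$ over feasible sets $S$; these sets may contain several items of one part, which is harmless. For each such $S$, \Cref{lem:insertion} with $T = R$ gives
$$ w(S) \leq r(R,B) + \sum_{e \in S} \Delta_e(R \vee e, B). $$
I then weaken each summand. Since $R_{-e} \cup \{e\} \subseteq R \cup \{e\}$ and $e$ lies in the smaller set, \Cref{cor:mono} gives $\Delta_e(R \vee e, B) \leq \Delta_e(R_{-e} \vee e, B)$. Taking expectation over $R$ and averaging over $S \sim \mathcal D$ yields
$$ w^{\T} x \leq \EE[r(R,B)] + \sum_{e} x_e\, \EE\left[\Delta_e(R_{-e} \vee e, B)\right]. $$

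The key step is to show
$$ qx_e\,\EE\left[\Delta_e(R_{-e} \vee e, B)\right] = \EE\left[\mathds 1[e \in R]\,\Delta_e(R,B)\right]. $$
The random variable $\Delta_e(R_{-e} \vee e, B)$ depends only on $R_{-e}$, so it is independent of the event $\{e \in R\}$, which has probability $qx_e$. On that event, at most one item of $\pi(e)$ is chosen and it is $e$, so $R = R_{-e} \cup \{e\}$. Both claims follow directly from the definition of $R_\Pi$.

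It remains to multiply the displayed inequality by $q$ and substitute this identity. This gives
$$ q\,w^{\T}x \leq q\,\EE[r(R,B)] + \EE\left[\sum_{e \in R} \Delta_e(R,B)\right] \leq (1+q)\,\EE[r(R,B)], $$
where the last inequality is \Cref{lem:del}. Rearranging proves the theorem.

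The only point I expect to need care is the replacement of $R \vee e$ by $R_{-e} \vee e$. The naive version of the identity fails under multiple-choice sampling, because $\{e \in R\}$ is correlated with the presence of the other items of $\pi(e)$, and it is exactly \Cref{cor:mono} that allows discarding those items at no cost. The hypothesis $qx(\pi) \leq 1$ is used only to ensure that $R_\Pi(qx)$ is well defined.
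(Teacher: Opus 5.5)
Your proof is correct and is essentially the paper's argument. Your set $R_{-e}\cup\{e\}=(R\setminus\pi(e))\cup\{e\}$ is exactly the paper's redefined $R\vee e$ from \eqref{eq:mc-insert}, and you derive \Cref{lem:mc-insertion} by applying \Cref{lem:insertion} and then \Cref{cor:mono} once more, where the paper applies \Cref{cor:mono} to $T\vee e_j\subseteq S_j$ directly; your independence claim and the identity $R=R_{-e}\cup\{e\}$ on $\{e\in R\}$ are precisely the two observations the paper uses to finish.
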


The proof follows from the same argument using deletion sensitivity and an insertion lemma as in \cref{sec:knapsack}. Deletion monotonicity (\Cref{lem:mono} and \Cref{cor:mono}) and bounded total deletion sensitivity (\Cref{lem:del}) are not dependent on the sampling structure and therefore continue to apply unchanged in the multiple-choice setting. The only issue from the multiple-choice sampling structure is that inserting an item $e$ may conflict with an item already selected from the same part of $\Pi$. Therefore, we first remove all items from that part, if any, and then insert $e$. Formally, given $e \in [n]$, we use $\pi(e)$ to denote the part of $\Pi$ containing $e$, and replace \eqref{eq:insert} with
\begin{equation} \label{eq:mc-insert}
    T \vee e := (T \setminus \pi(e)) \cup \{ e \}
\end{equation}
for each $T \subseteq [n]$ and $e \in [n]$. We have the following multiple-choice insertion lemma.

\begin{lemma}[Multiple-choice insertion lemma] \label{lem:mc-insertion}
    Let $S, T \subseteq [n]$ be such that $d(S) \leq B$. Then
    $$ w(S) \leq r(T, B) + \sum_{e \in S} \Delta_e(T \vee e, B). $$
\end{lemma}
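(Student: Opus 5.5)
The plan is to run exactly the telescoping argument from the proof of \Cref{lem:insertion}. The only new point is that the modified set $T \vee e = (T \setminus \pi(e)) \cup \{ e \}$ from \eqref{eq:mc-insert} is still contained in the insertion chain. So deletion monotonicity (\Cref{cor:mono}) applies just as before. In fact the removal of $T \cap \pi(e)$ only shrinks the set, and by \Cref{cor:mono} shrinking can only increase the deletion sensitivity. The multiple-choice version is therefore, if anything, a weaker statement than one would fear.

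Concretely, write $S = \{ e_1, \ldots, e_m \}$ and set $S_j := T \cup \{ e_1, \ldots, e_j \}$ for $j \in \{ 0, \ldots, m \}$. Since $S \subseteq S_m = T \cup S$ and $d(S) \leq B$, we have $w(S) \leq r(S_m, B)$. Telescoping gives
$$ r(S_m, B) = r(T, B) + \sum_{j=1}^m \left( r(S_j, B) - r(S_{j-1}, B) \right). $$
For each $j$, I bound the $j$-th increment by $\Delta_{e_j}(S_j, B)$ in two cases.
\begin{itemize}[itemsep=0pt]
    \item If $e_j \in S_{j-1}$ (for instance, because $e_j \in T$), then $S_j = S_{j-1}$. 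The increment is $0$, which is at most the nonnegative quantity $\Delta_{e_j}(S_j, B)$.
    \item Otherwise, $S_{j-1} = S_j \setminus \{ e_j \}$. The increment is then $r(S_j, B) - r(S_j \setminus \{ e_j \}, B)$, which is the term $k = B$ in the maximum defining $\Delta_{e_j}(S_j, B)$.
\end{itemize}

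Next I verify the containment $T \vee e_j = (T \setminus \pi(e_j)) \cup \{ e_j \} \subseteq T \cup \{ e_j \} \subseteq S_j$, noting that $e_j \in T \vee e_j$. \Cref{cor:mono} then gives $\Delta_{e_j}(S_j, B) \leq \Delta_{e_j}(T \vee e_j, B)$. Summing over $j$ yields $w(S) \leq r(T, B) + \sum_{e \in S} \Delta_e(T \vee e, B)$, as claimed.

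There is no real obstacle here. The only points needing care are the degenerate case $e_j \in T$ (handled by nonnegativity of $\Delta$) and checking that the hypotheses of \Cref{cor:mono} (namely $T \vee e_j \subseteq S_j$ and $e_j \in T \vee e_j$) hold with the new definition of $\vee$. The multiple-choice structure matters only later, when \Cref{thm:mc-knapsack} is proved. There, the step ``$\Delta_e(R \vee e, B)$ is independent of whether $e \in R$'' must be justified using that $R \setminus \pi(e)$ is independent of the outcome in part $\pi(e)$. The step $\Pr[e \in R] = q x_e$ must also be checked under $R_\Pi(qx)$.
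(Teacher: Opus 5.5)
Your proof is correct and is essentially the paper's argument: the paper's proof is the telescoping chain $S_j = T \cup \{e_1, \ldots, e_j\}$ from \Cref{lem:insertion}, and the only new check is your containment $T \vee e_j = (T \setminus \pi(e_j)) \cup \{e_j\} \subseteq S_j$ with $e_j \in T \vee e_j$, so that \Cref{cor:mono} still applies. Your case split on whether $e_j \in S_{j-1}$ is harmless; the paper avoids it by using $S_{j-1} \supseteq S_j \setminus \{e_j\}$ and monotonicity of $r$.
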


The proof of \Cref{lem:mc-insertion} is identical to that of \Cref{lem:insertion}, after replacing \eqref{eq:insert} with \eqref{eq:mc-insert}, and is omitted for conciseness.

With \Cref{cor:mono} and \Cref{lem:del,lem:mc-insertion}, the proof of \Cref{thm:mc-knapsack} follows essentially the same argument as that of \Cref{thm:knapsack}. The only additional observations are that $R \vee e = R$ for each $e \in R$, and that $\Delta_e(R \vee e, B)$ is independent of the event $e \in R$, where $R = R_\Pi(qx)$. We omit the proof for conciseness.

\subsection{Rounding Algorithms for Multiple-Choice Demand Matching}

We now extend the algorithms from \cref{sec:simple} to the \textsc{Multiple-Choice Demand Matching} problem. The algorithms follow the same source-sink rounding framework as before. The main difference is that, for each pair of vertices $u$ and $v$, at most one edge may be selected from $F_{u, v}$. Therefore, the edges proposed to a sink are no longer sampled independently; instead, they follow the multiple-choice sampling structure defined in \cref{subsec:mc-knapsack}. Since multiple-choice contention resolution for the integral knapsack polytope has the same guarantee as (standard) contention resolution, we have the same approximation analysis. We formally state the algorithms for general and bipartite graphs in \Cref{alg:mc,alg:mc-bip}, respectively.

\begin{algorithm}[ht]
    \caption{A rounding algorithm for \textsc{Multiple-Choice Demand Matching}.}
    \label{alg:mc}
    \KwIn{a (multi)graph $G = (V, E)$, $b : V \to \ZZ_+$, $d : E \to \NN$, $w : E \to \ZZ_+$, $q \in [0, 1]$, and a feasible solution $x$ of \eqref{eq:mc-knapsack-lp}.}
    \KwOut{a subset $M \subseteq E$.}
    \ForEach{$v \in V$ independently}{
        Let $\mathcal D_v$ be a distribution over subsets $S \subseteq \delta(v)$ with $d(S) \leq b(v)$, $|S \cap \pi| \leq 1$ for all $\pi \in \Pi_v$, and $\Pr[e \in S] = x_e$ for all $e \in \delta(v)$. \\
        Label $v$ as a \emph{source} with probability $q$ and as a \emph{sink} otherwise.
    }
    \ForEach{source $u \in V$ independently}{
        Sample $S_u \sim \mathcal D_u$ independent of the vertex labels.
    }
    \ForEach{sink $v \in V$}{
        Let $R_v := \bigcup \{ S_u \cap F_{u, v} : u \in N(v) \text{ is a source} \}$. \\
        Find a maximum weight $T_v \subseteq R_v$ such that $d(T_v) \leq b(v)$.
    }
    \Return{$M := \bigcup \{ T_v : v \in V \text{ sink} \}$}
\end{algorithm}

\begin{algorithm}[ht]
    \caption{A rounding algorithm for \textsc{Multiple-Choice Demand Matching} on bipartite (multi)graphs.}
    \label{alg:mc-bip}
    \KwIn{a bipartite (multi)graph $G = (U \sqcup W, E)$, $b : U \sqcup W \to \ZZ_+$, $d : E \to \NN$, $w : E \to \ZZ_+$, and a feasible solution $x$ of \eqref{eq:mc-knapsack-lp}.}
    \KwOut{a subset $M \subseteq E$.}
    \ForEach{$u \in U$ independently}{
        Let $\mathcal D_u$ be a distribution over subsets $S \subseteq \delta(u)$ with $d(S) \leq b(u)$, $|S \cap \pi| \leq 1$ for all $\pi \in \Pi_u$, and $\Pr[e \in S] = x_e$ for all $e \in \delta(u)$. \\
        Sample $S_u \sim \mathcal D_u$.
    }
    \ForEach{$v \in W$}{
        Let $R_v := \bigcup \{ S_u \cap F_{u, v} : u \in N(v) \}$. \\
        Find a maximum weight $T_v \subseteq R_v$ such that $d(T_v) \leq b(v)$.
    }
    \Return{$M := \bigcup_{v \in W} T_v$}
\end{algorithm}

Let $M \subseteq E$ be the output of \Cref{alg:mc} or \Cref{alg:mc-bip}. The feasibility of vertex-capacity constraints follows exactly as in \Cref{lem:simple-feasible,lem:bip-simple-feasible}. It remains to verify the multiple-choice constraints. Let $u, v \in V$. If $u$ and $v$ are both sources or both sinks, then $M \cap F_{u, v} = \emptyset$. Otherwise, suppose without loss of generality that $u$ is a source and $v$ is a sink. Since $S_u$ is feasible in $\mathcal K_{\mathsf{mc}}(u)$ and since $M \cap F_{u, v} \subseteq S_u \cap F_{u, v}$,
$$ \left|M \cap F_{u, v}\right| \leq \left|S_u \cap F_{u, v}\right| \leq 1. $$
Hence, $M$ is a multiple-choice demand matching. We state the feasibility of \Cref{alg:mc,alg:mc-bip} in the following lemmas.

\begin{lemma} \label{lem:mc-feasible}
    Let $M \subseteq E$ be the output of \Cref{alg:mc}. Then $d(M \cap \delta(v)) \leq b(v)$ for all $v \in V$, and $|M \cap F_{u, v}| \leq 1$ for all $u, v \in V$.
\end{lemma}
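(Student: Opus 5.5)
The argument splits into the two kinds of constraint in the statement, and for each vertex it distinguishes whether the vertex was labeled a source or a sink. The key structural fact is that an edge $e = uv$ can enter $M$ only through $T_v$ at a sink $v$. For that, $e \in R_v$ must hold, and $R_v$ only contains edges of the form $S_u \cap F_{u,v}$ with $u$ a source. So every edge of $M$ joins a source $u$ to a sink $v$ and lies in $S_u$. I will state this first, since the remaining steps follow from it directly.

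\emph{Capacity constraints.} First let $v$ be a source. Every edge of $M \cap \delta(v)$ has its other endpoint a sink, so it was proposed by $v$ and lies in $S_v$. Since $S_v$ lies in the support of $\mathcal D_v$, we have $d(M \cap \delta(v)) \leq d(S_v) \leq b(v)$. Next let $v$ be a sink. Every edge of $M \cap \delta(v)$ has a source endpoint $u \neq v$ and is selected in $T_v$. It cannot lie in some other $T_{v'}$ without having $v$ as its source endpoint, which is impossible because $v$ is a sink. Hence $M \cap \delta(v) = T_v$, and $d(T_v) \leq b(v)$ holds by the choice made in the algorithm. Self-loops are excluded by assumption, so no edge has both endpoints equal to $v$.

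\emph{Multiple-choice constraints.} Fix an unordered pair $u, v$. If $u$ and $v$ are both sources or both sinks, the structural fact gives $M \cap F_{u,v} = \emptyset$. Otherwise assume, without loss of generality, that $u$ is the source and $v$ the sink. Then $M \cap F_{u,v} \subseteq S_u \cap F_{u,v}$. Since $F_{u,v} \in \Pi_u$ and $S_u$ satisfies $|S_u \cap \pi| \leq 1$ for every $\pi \in \Pi_u$, we conclude $|M \cap F_{u,v}| \leq 1$.

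No step is a real obstacle; the argument is essentially bookkeeping. The only point needing care is the claim $M \cap \delta(v) = T_v$ for a sink $v$. It requires checking that edges placed in $T_{v'}$ for other sinks $v'$ never touch $v$, and this holds because such edges join $v'$ to a source.
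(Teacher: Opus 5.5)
Your proof is correct and follows the paper's own argument. Capacity at a source comes from $M \cap \delta(v) \subseteq S_v$, and capacity at a sink from $M \cap \delta(v) = T_v$, both as in \Cref{lem:simple-feasible}. The multiple-choice constraint comes from the same source/sink case split, giving $|M \cap F_{u,v}| \leq |S_u \cap F_{u,v}| \leq 1$.
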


\begin{lemma} \label{lem:mc-bip-feasible}
    Let $M \subseteq E$ be the output of \Cref{alg:mc-bip}. Then $d(M \cap \delta(v)) \leq b(v)$ for all $v \in U \sqcup W$, and $|M \cap F_{u, v}| \leq 1$ for all $u, v \in U \sqcup W$.
\end{lemma}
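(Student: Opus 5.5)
The statement to prove is \Cref{lem:mc-bip-feasible}. I would follow the feasibility argument for \Cref{alg:mc} given just above, with the deterministic bipartition $U \sqcup W$ playing the role of the random source/sink labels. Since every edge of $G$ has one endpoint in $U$ and one in $W$, each edge $e = uv$ with $u \in U$, $v \in W$ can enter $M$ only through $T_v$. Moreover $T_v \subseteq R_v$, and $e \in R_v$ only if $e \in S_u \cap F_{u,v}$. Hence $M \cap \delta(u) \subseteq S_u$ for every $u \in U$, and $M \cap \delta(v) = T_v$ for every $v \in W$.

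\textbf{Capacity constraints.} I would split into the two sides of the bipartition.
\begin{itemize}
\item For $u \in U$, the support of $\mathcal D_u$ consists of subsets $S$ with $d(S) \le b(u)$. Therefore
$$ d(M \cap \delta(u)) \le d(S_u) \le b(u). $$
\item For $v \in W$, we have $d(M \cap \delta(v)) = d(T_v) \le b(v)$ by the choice of $T_v$ in the algorithm.
\end{itemize}

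\textbf{Multiple-choice constraints.} Fix $u, v \in U \sqcup W$.
\begin{itemize}
\item If $u$ and $v$ lie on the same side, then $F_{u,v} = \emptyset$ because $G$ is bipartite, so the bound holds trivially.
\item Otherwise, say $u \in U$ and $v \in W$. Then $M \cap F_{u,v} \subseteq S_u \cap F_{u,v}$. Since $F_{u,v} \in \Pi_u$ and every $S$ in the support of $\mathcal D_u$ satisfies $|S \cap \pi| \le 1$ for all $\pi \in \Pi_u$, we get $|M \cap F_{u,v}| \le 1$.
\end{itemize}

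There is no real obstacle here. The only point needing care is that $\mathcal D_u$ exists, which follows from $x \in K_{\mathsf{mc}}(u)$ and the definition of $K_{\mathsf{mc}}(u)$ as a convex hull of sets in $\mathcal K_{\mathsf{mc}}(u)$.
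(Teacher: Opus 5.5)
Your proposal is correct and follows essentially the same argument as the paper. You bound the capacity constraints via $M \cap \delta(u) \subseteq S_u$ for sources and $M \cap \delta(v) = T_v$ for sinks. You then obtain the multiple-choice constraints from $M \cap F_{u,v} \subseteq S_u \cap F_{u,v}$ together with $S_u \in \mathcal K_{\mathsf{mc}}(u)$, with the same-side case being trivial.
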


We next analyze the expected weight of the output. The argument follows from the same analysis as in \Cref{lem:simple-weight,lem:bip-simple-weight}, with \Cref{thm:mc-knapsack} replacing \Cref{thm:knapsack}. In particular, we verify that the assumptions of \Cref{thm:mc-knapsack} are satisfied.

First consider \Cref{alg:mc}. Assume without loss of generality that $0 < q < 1$. Let $v \in V$. Since $x \in K_{\mathsf{mc}}(v) \subseteq K(v)$, we have $x \in K(v)$ and $x(F_{u, v}) \leq 1$ for each $u \in N(v)$. For each $u \in N(v)$ and $e \in F_{u, v}$,
$$ \Pr\left[R_v \cap F_{u, v} = \{ e \} \;\middle|\; v \text{ is a sink}\right] = \Pr[u \text{ is a source}] \cdot \Pr\left[e \in S_u \;\middle|\; u \text{ is a source}\right] = qx_e. $$
Hence, for each $u \in N(v)$,
$$ \Pr\left[R_v \cap F_{u, v} = \emptyset \;\middle|\; v \text{ is a sink}\right] = 1 - qx(F_{u, v}). $$
Moreover, conditioned on $v$ being a sink, the random sets $R_v \cap F_{u, v}$ are independent across different neighbors $u \in N(v)$, and hence has the multiple-choice sampling structure of \Cref{thm:mc-knapsack} with partition $\Pi_v$; in other words, $R_v$ is distributed as $R_{\Pi_v}(qx|_{\delta(v)})$. Hence, \Cref{thm:mc-knapsack} implies that
$$ \EE\left[w(T_v) \;\middle|\; v \text{ is a sink}\right] \geq \frac{q}{1 + q} \sum_{e \in \delta(v)} w(e) x_e. $$
For \Cref{alg:mc-bip}, the same argument yields that for each sink $v$,
$$ \EE\left[w(T_v)\right] \geq \frac{1}{2} \sum_{e \in \delta(v)} w(e) x_e. $$
The remainder of the analysis is identical to \Cref{lem:simple-weight,lem:bip-simple-weight} by summing over all vertices (for \Cref{alg:mc}) or over all sinks (for \Cref{alg:mc-bip}). We state the expected weight guarantees of \Cref{alg:mc,alg:mc-bip} in the following lemmas.

\begin{lemma} \label{lem:mc-weight}
    Let $M \subseteq E$ be the output of \Cref{alg:mc}. Then
    $$ \EE[w(M)] \geq w^{\T} x \cdot \frac{2q(1-q)}{1+q}. $$
    Choosing $q = \sqrt{2} - 1$ gives that $\EE[w(M)] \geq (6 - 4\sqrt{2}) \cdot w^{\T} x$.
\end{lemma}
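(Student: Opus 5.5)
The plan is to mirror the proof of \Cref{lem:simple-weight}, with \Cref{thm:mc-knapsack} taking the place of \Cref{thm:knapsack}. The edge cases $q \in \{0,1\}$ are trivial, since the claimed bound is then $0$, so I will assume $0 < q < 1$. Since $T_v$ is nonempty only when $v$ is a sink, and the sets $T_v$ over sinks $v$ cover $M$, I will decompose
\begin{align*}
\EE[w(M)] = \sum_{v \in V} \Pr[v \text{ is a sink}] \cdot \EE[w(T_v) \mid v \text{ is a sink}].
\end{align*}
This step needs the union $\bigcup_v T_v$ over sinks to be disjoint. It is: every edge of $M$ joins a source to a sink, so each edge lies in $T_v$ for exactly one sink $v$, namely its sink endpoint.

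The key step is a per-vertex bound. Fix $v$ and condition on $v$ being a sink. I would first check the hypotheses of \Cref{thm:mc-knapsack} for the restriction $x|_{\delta(v)}$ with partition $\Pi_v$. Since $x \in K_{\mathsf{mc}}(v) \subseteq K(v)$, the restriction lies in the integral knapsack polytope with capacity $b(v)$. Moreover, every set in the support of $\mathcal K_{\mathsf{mc}}(v)$ meets each part in at most one edge, so $x(F_{u,v}) \leq 1$, and hence $qx(F_{u,v}) \leq 1$.

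Next I would identify the distribution of $R_v$. For a neighbor $u$, the set $R_v \cap F_{u,v}$ is nonempty only when $u$ is a source, and then it equals $S_u \cap F_{u,v}$, which has at most one element. Because the label of $u$ is independent of the label of $v$ and of $S_u$, this set equals $\{e\}$ with probability exactly $q x_e$. Independence across distinct neighbors $u$ holds because labels and the samples $S_u$ are independent across vertices, and there are no self-loops, so each part $F_{u,v}$ depends only on $u$'s randomness. Hence, conditioned on $v$ being a sink, $R_v \sim R_{\Pi_v}(qx|_{\delta(v)})$. Because $T_v$ is a maximum-weight feasible subset of $R_v$, \Cref{thm:mc-knapsack} gives
\begin{align*}
\EE[w(T_v) \mid v \text{ is a sink}] \geq \frac{q}{1+q} \sum_{e \in \delta(v)} w(e) x_e.
\end{align*}

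Finally, I would multiply by $\Pr[v \text{ is a sink}] = 1-q$ and sum over $v$. Each edge is counted twice, once at each endpoint, which gives $\frac{2q(1-q)}{1+q}\, w^{\T}x$. Substituting $q=\sqrt2-1$ yields $\frac{2(\sqrt2-1)(2-\sqrt2)}{\sqrt2} = 2(\sqrt2-1)^2 = 6-4\sqrt2$. The only delicate step is the independence claim for the multiple-choice sampling structure; everything after it is routine bookkeeping.
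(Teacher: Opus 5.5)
Your proposal is correct and follows essentially the same route as the paper. You check that $x|_{\delta(v)}$ lies in the integral knapsack polytope with $x(F_{u,v}) \leq 1$, show that conditioned on $v$ being a sink $R_v$ is distributed as $R_{\Pi_v}(qx|_{\delta(v)})$, apply \Cref{thm:mc-knapsack}, and sum over vertices exactly as in \Cref{lem:simple-weight}; your explicit check that the sets $T_v$ are disjoint, so that $w(M) = \sum_{v \text{ sink}} w(T_v)$, is a correct detail the paper uses without comment.
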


\begin{lemma} \label{lem:mc-bip-weight}
    Let $M \subseteq E$ be the output of \Cref{alg:mc-bip}. Then
    $$ \EE[w(M)] \geq \frac{1}{2} \cdot w^{\T} x. $$
\end{lemma}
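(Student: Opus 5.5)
The argument mirrors \Cref{lem:bip-simple-weight}, with \Cref{thm:mc-knapsack} at $q = 1$ in place of \Cref{thm:knapsack}. Since every edge of a bipartite graph $G = (U \sqcup W, E)$ has exactly one endpoint in $W$, the sets $\delta(v)$ for $v \in W$ partition $E$. The output is $M = \bigcup_{v \in W} T_v$ with $T_v \subseteq \delta(v)$, and the sets $T_v$ are pairwise disjoint. Hence
$$ \EE[w(M)] = \sum_{v \in W} \EE[w(T_v)], \qquad w^{\T} x = \sum_{v \in W} \sum_{e \in \delta(v)} w(e) x_e. $$
It therefore suffices to show that $\EE[w(T_v)] \geq \frac{1}{2} \sum_{e \in \delta(v)} w(e) x_e$ for each sink $v \in W$.

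Fix $v \in W$. First, the hypotheses of \Cref{thm:mc-knapsack} with $q = 1$ hold. Feasibility of $x$ in \eqref{eq:mc-knapsack-lp} gives $x \in K_{\mathsf{mc}}(v) \subseteq K(v)$, so $x|_{\delta(v)}$ lies in the integral knapsack polytope with sizes $d$ and capacity $b(v)$. Every set in $\mathcal K_{\mathsf{mc}}(v)$ meets each part $F_{u,v} \in \Pi_v$ in at most one edge, so $x(F_{u,v}) \leq 1$. This is exactly the condition $qx(\pi) \leq 1$ for $q = 1$.

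Next, $R_v$ has the multiple-choice sampling distribution. For each $u \in N(v) \subseteq U$, the set $R_v \cap F_{u,v} = S_u \cap F_{u,v}$ has at most one element, since $S_u$ obeys the multiple-choice constraint at $u$ and $F_{u,v} \in \Pi_u$. For $e \in F_{u,v}$, we get $\Pr[R_v \cap F_{u,v} = \{e\}] = \Pr[e \in S_u] = x_e$, and hence $\Pr[R_v \cap F_{u,v} = \emptyset] = 1 - x(F_{u,v})$. Distinct neighbors $u$ of $v$ are distinct sources, and the sets $S_u$ are sampled independently. So the parts of $R_v$ are independent, and $R_v \sim R_{\Pi_v}(x|_{\delta(v)})$.

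Finally, $T_v$ is a maximum-weight subset of $R_v$ of demand at most $b(v)$, so \Cref{thm:mc-knapsack} gives $\EE[w(T_v)] \geq \frac{1}{1+1} \sum_{e \in \delta(v)} w(e) x_e$. Summing over $v \in W$ yields the claimed bound.

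The only step needing care is the distributional identification of $R_v$. It relies on bipartiteness in two ways: all neighbors of a sink are sources, so there is no labeling loss, and each part $F_{u,v}$ is controlled by a single independent sample $S_u$. The remaining steps are direct substitutions.
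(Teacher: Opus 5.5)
Your proposal is correct and takes the same route as the paper. Like the paper, you check the hypotheses of \Cref{thm:mc-knapsack} at $q = 1$ (namely $x \in K_{\mathsf{mc}}(v) \subseteq K(v)$ and $x(F_{u,v}) \leq 1$), identify $R_v$ as distributed as $R_{\Pi_v}(x|_{\delta(v)})$ using the independent samples $S_u$, and sum the resulting per-sink bound over the partition $\{\delta(v) : v \in W\}$ of $E$.
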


\paragraph{Polynomial-Time Implementation.} We briefly discuss the polynomial-time implementation of \Cref{alg:mc,alg:mc-bip}. As in \cref{subsec:polytime}, three steps in the overall implementation cannot be performed exactly in polynomial time, which we address in essentially the same way below. First, \Cref{lem:fptas} implies that, for every $\gamma > 0$, we can compute, in time polynomial in $1/\gamma$ and the input length, a feasible solution $x$ of \eqref{eq:mc-knapsack-lp} whose objective value is at least $(1 - \gamma)$ times the LP optimum. Second, for each source $u$, using the decomposition technique of \citet{lavi2011truthful} together with an FPTAS for the \textsc{$0$-$1$ Multiple-Choice Knapsack} problem \citep[e.g.,][]{lawler1979fast,bansal2004improved}, we can compute, in time polynomial in $1/\eta$ and the input length, a polynomial-support distribution over sets that are feasible in $\mathcal K_{\mathsf{mc}}(u)$ and that satisfy
\begin{align*}
    \Pr[e \in S_u] = x_e (1 - \eta) && \forall e \in \delta(u),
\end{align*}
for some $\eta > 0$ to be determined later. Finally, for each sink $v$, we apply an FPTAS for the \textsc{$0$-$1$ Knapsack} problem \citep[e.g.,][]{ibarra1975fast,lawler1979fast,magazine1981fully} to find a subset $T_v \subseteq R_v$ such that $d(T_v) \leq b(v)$ and
$$ w(T_v) \geq (1 - \zeta) \cdot \max\{ w(T) : T \subseteq R_v, d(T) \leq b(v) \}, $$
for some $\zeta > 0$ to be determined later. By setting $q = \sqrt{2} - 1$ (for \Cref{alg:mc}) and choosing $\gamma, \eta, \zeta$ to be sufficiently small constant multiples of $\varepsilon$, we obtain fully polynomial-time approximation algorithms with approximation ratios of $3/2 + \sqrt{2} + \varepsilon$ and $2 + \varepsilon$ for general and bipartite instances of the \textsc{Multiple-Choice Demand Matching} problem, respectively, for every $\varepsilon > 0$.

\subsection{Putting Everything Together}

We now combine the preceding ingredients to prove \Cref{thm:main}. Let $(G = (V, E), b, d, w)$ be an instance of the \textsc{Demand Matching} problem. Let $\OPT$ be the optimum of this instance. Fix $\xi > 0$. As described in \cref{subsec:pareto}, we construct the $\xi$-Pareto auxiliary graph $\mathcal G^\xi = (V, \mathcal E^\xi)$, which defines an instance of the \textsc{Multiple-Choice Demand Matching} problem. This auxiliary instance has size polynomial in $1/\xi$ and the input length. Let $\OPT^\xi$ denote the optimum of this auxiliary instance. Each multiple-choice demand matching $\mathcal M$ in $\mathcal G^\xi$ expands to a demand matching
$$ M := \bigcup_{F \in \mathcal M} F $$
in $G$ with $w(M) = w(\mathcal M)$. By the $\xi$-Pareto property,
$$ \OPT^\xi \geq (1 - \xi) \cdot \OPT. $$

For general graphs, we apply the $(3/2 + \sqrt{2} + \rho)$-approximation algorithm for the \textsc{Multiple-Choice Demand Matching} problem to $\mathcal G^\xi$ for some $\rho > 0$ to be determined later. This produces a multiple-choice demand matching $\mathcal M$ satisfying
$$ \EE[w(\mathcal M)] \geq \frac{\OPT^\xi}{\frac{3}{2} + \sqrt{2} + \rho} \geq \frac{1 - \xi}{\frac{3}{2} + \sqrt{2} + \rho} \cdot \OPT. $$
Choosing $\xi$ and $\rho$ to be sufficiently small constant multiples of $\varepsilon$ yields a randomized $(3/2 + \sqrt{2} + \varepsilon)$-approximation algorithm for the \textsc{Demand Matching} problem.

For bipartite graphs, we note that $\mathcal G^\xi$ is also bipartite since every bundle has the same endpoints as its original parallel edges, so we apply the $(2 + \rho)$-approximation algorithm for the \textsc{Multiple-Choice Demand Matching} problem to $\mathcal G^\xi$ for some $\rho > 0$ to be determined later. This produces a multiple-choice demand matching $\mathcal M$ satisfying
$$ \EE[w(\mathcal M)] \geq \frac{\OPT^\xi}{2 + \rho} \geq \frac{1 - \xi}{2 + \rho} \cdot \OPT. $$
Choosing $\xi$ and $\rho$ to be sufficiently small constant multiples of $\varepsilon$ yields a randomized $(2 + \varepsilon)$-approximation algorithm for the \textsc{Demand Matching} problem.

Since the $\xi$-Pareto auxiliary instance has size polynomial in $1/\xi$ and the input length, both algorithms run in fully polynomial time. This proves \Cref{thm:main}.

%% file: concluding.tex
\section{Concluding Remarks} \label{sec:concluding}

In this paper, we have combined knapsack intersection relaxations with contention resolution to obtain fully polynomial-time randomized approximation algorithms for the \textsc{Demand Matching} problem, achieving ratios of $3/2 + \sqrt{2} + \varepsilon \approx 2.914 + \varepsilon$ for general graphs and $2 + \varepsilon$ for bipartite graphs in expectation, for every $\varepsilon > 0$. This gives the first approximation ratio strictly below $3$, overcoming the integrality gap barrier of the natural LP relaxation through a stronger formulation. A key ingredient is a $(q, 1/(1+q))$-balanced contention resolution scheme for the integral knapsack polytope for every $q \in [0, 1]$, which may be of independent interest in other packing problems with local knapsack constraints. The balance guarantee $1/(1+q)$ is optimal in the worst case over all knapsack instances.

Several questions remain open. First, can the approximation ratios be improved further for general or bipartite graphs? In particular, determining the exact integrality gaps of \eqref{eq:knapsack-lp} on general and bipartite graphs remains an interesting question. More broadly, \eqref{eq:knapsack-lp} is the first level of the knapsack intersection hierarchy introduced by \citet{jozefiak2023knapsack}. Can a fixed higher level of this hierarchy yield approximation algorithms with better guarantees? Another direction is to identify further applications of the knapsack contention resolution guarantee and its multiple-choice extension. Finally, we ask whether the same approximation guarantees for multigraphs can be achieved by efficiently rounding \eqref{eq:knapsack-lp} directly, without the auxiliary bundle construction.

%% file: acknowledgements.tex
\paragraph{Acknowledgements.} We thank Bruce Shepherd for fruitful discussions. We used GPT-6 Astra to assist with editing and checking for mathematical correctness. In addition to minor tweaks, the tool identified a fixable mistake in \cref{apx:tight}. The tool was also used for the creation of \Cref{tab:dm} and \Cref{fig:bundling}. The authors produced the mathematical results in the paper themselves and vouch for their originality.

%% file: tight.tex
\section{Tightness of the Knapsack Contention Resolution Guarantee} \label{apx:tight}

In this appendix, we prove \Cref{prop:tight}. In particular, we define the fractional solution $x$ to be a convex combination of two feasible solutions, one consisting of a single large item that occupies the entire knapsack, and the other consisting of many unit-demand items. Hence, an optimal subset of the sampled set takes either the large item or the sampled unit-demand items, but not both. We choose the weights so that the weight of the large item approximates the expected total weight of the sampled unit-demand items, and that
$$ \frac{\EE[r(R, B)]}{qw^{\T} x} < \frac{1}{1 + q} + \varepsilon, $$
where $R := R(qx)$ and $r(R, B) := \max\{ w(T) : T \subseteq R, d(T) \leq B \}$. The proof below makes this intuition rigorous.

\begin{proof}[Proof of \Cref{prop:tight}]
    Let $\varepsilon > 0$. Let $M \in \NN \cap [2, \infty)$ be a parameter to be determined later. Let $B := M^2$. Let $n := M^2 + 1$. Let
    \begin{alignat*}{3}
        & d_i := 1 && \qquad w_i := 1 && \qquad \forall i \in [n - 1], \\
        & d_n := M^2 && \qquad w_n := \lfloor qM \rfloor.
    \end{alignat*}
    Recall that $P_{\mathcal K}^{\mathsf{int}} := \Conv\{ \chi^S : S \subseteq [n], d(S) \leq B \}$. Then $\chi^{[n - 1]}, \chi^{\{ n \}} \in P_{\mathcal K}^{\mathsf{int}}$. Hence,
    $$ x := \frac{1}{M} \chi^{[n - 1]} + \left(1 - \frac{1}{M}\right) \chi^{\{ n \}} \in P_{\mathcal K}^{\mathsf{int}}, $$
    where $x_i = 1/M$ for all $i \in [n - 1]$ and $x_n = 1 - 1/M$. We have
    \begin{align*}
        w^{\T} x &= (n - 1) \cdot 1 \cdot \frac{1}{M} + \lfloor qM \rfloor \cdot \left(1 - \frac{1}{M}\right) \geq M^2 \cdot \frac{1}{M} + (qM - 1) \left(1 - \frac{1}{M}\right) \\
        &= qM + M - q - 1 + \frac{1}{M} = (1 + q)M - (1 + q) + \frac{1}{M} \geq (1 + q)M - 2.
    \end{align*}
    Let $R := R(qx)$. Let $A$ be the event that $n \in R$. Then
    $$ \Pr[A] = qx_n = q\left(1 - \frac{1}{M}\right). $$
    Let $X := |R \cap [n - 1]| = |R \cap [M^2]|$. Then $X \sim \Bin(M^2, q/M)$. Hence,
    \begin{align*}
        \EE[X] &= M^2 \cdot \frac{q}{M} = qM, \\
        \Var[X] &= M^2 \cdot \frac{q}{M} \cdot \left(1 - \frac{q}{M}\right) \leq qM.
    \end{align*}
    Then
    \begin{align*}
        r(R, B) &= \left\{
            \begin{array}{ll}
                \max\left\{ w_n, X \right\} & \text{if $A$ occurs}, \\
                X & \text{otherwise}
            \end{array}
        \right. \\
        &= X + \mathds 1[A] \cdot \left(w_n - X\right)_+,
    \end{align*}
    where we use the notation $z_+ := \max\{ z, 0 \}$ for all $z \in \RR$. By the independence of $\mathds 1[A]$ and $X$,
    $$ \EE[r(R, B)] = \EE[X] + \Pr[A] \cdot \EE\left[\left(w_n - X\right)_+\right]. $$
    Since $w_n = \lfloor qM \rfloor \leq qM$, we have $(w_n - X)_+ \leq (qM - X)_+ \leq |qM - X| = |\EE[X] - X|$. By the Cauchy--Schwarz inequality,
    $$ \EE\left[\left(w_n - X\right)_+\right] \leq \EE[|\EE[X] - X|] \leq \sqrt{\EE\left[\left(\EE[X] - X\right)^2\right]} = \sqrt{\Var[X]} \leq \sqrt{qM}. $$
    Hence,
    $$ \EE[r(R, B)] \leq qM + q\left(1 - \frac{1}{M}\right) \cdot \sqrt{qM} \leq qM + q\sqrt{qM}. $$
    It follows that
    $$ \frac{\EE[r(R, B)]}{qw^{\T} x} \leq \frac{qM + q\sqrt{qM}}{q((1 + q)M - 2)} = \frac{M + \sqrt{qM}}{(1 + q)M - 2} = \frac{1 + \sqrt{\frac{q}{M}}}{1 + q - \frac{2}{M}} \to \frac{1}{1 + q}, $$
    as $M \to \infty$. Hence, for sufficiently large $M$, we have $\EE[r(R, B)]/(qw^{\T} x) < 1/(1 + q) + \varepsilon$, i.e.,
    $$ \EE[r(R, B)] < q\left(\frac{1}{1 + q} + \varepsilon\right) w^{\T} x. $$
    This completes the proof.
\end{proof}

%% file: pareto.tex
\section{Polynomial-Time Construction of a Pareto Family} \label{apx:pareto}

In this appendix, we prove \Cref{lem:pareto}. Let $N$ be a finite ground set. Let $d : N \to \NN$. Let $w : N \to \ZZ_+$. Let $B \in \ZZ_+$. Let
$$ \mathcal F := \{ S \subseteq N: S \neq \emptyset, d(S) \leq B \}. $$
Let $\xi > 0$. Unless noted otherwise, polynomial bounds in this appendix are with respect to $1/\xi$ and the encoding length of $(N, d, w, B)$.

We preprocess the input to handle several boundary cases. First, we discard every item $e \in N$ with $d(e) > B$, since such an item belongs to no feasible set. If no item remains, then $\mathcal F = \emptyset$, so $\emptyset$ is a $\xi$-Pareto family of $\mathcal F$. Otherwise, let $e_0 \in N$ be a remaining item of minimum demand, and let $N_+ := \{ e \in N : w(e) > 0 \}$. If $N_+ = \emptyset$, then $\{ \{ e_0 \} \}$ is a $\xi$-Pareto family of $\mathcal F$. Otherwise, it suffices to construct a $\xi$-Pareto family $\mathcal F^\xi$ of
$$ \mathcal F_+ := \left\{ S \subseteq N_+ : S \neq \emptyset, d(S) \leq B \right\}, $$
and $\mathcal F^\xi \cup \{ \{ e_0 \} \}$ is a $\xi$-Pareto family of $\mathcal F$. Indeed, every set $S \in \mathcal F$ with $w(S) > 0$ can be replaced by $S \cap N_+ \in \mathcal F_+$ without increasing its demand or changing its weight, while every set $S \in \mathcal F$ with $w(S) = 0$ is represented by $\{ e_0 \}$. Hence, we assume without loss of generality that $N \neq \emptyset$, and $d(e) \leq B$ and $w(e) \in \NN$ for all $e \in N$.

\citet[Theorem 4]{herzel2021one} showed that a $\xi$-Pareto family of polynomial cardinality can be computed in polynomial time if and only if the following problem, which we call the \textsc{Threshold Knapsack} problem, can be solved in time polynomial in $1/\varphi$ and the encoding length: given $\varphi \in \QQ \cap (0, 1)$ and $\Psi \in \QQ_{> 0}$, either certify that there is no $S \in \mathcal F$ with $w(S) \geq \Psi$, or return some $S \in \mathcal F$ such that
\begin{gather*}
    d(S) \leq d^\star(\Psi), \qquad w(S) \geq (1 - \varphi) \Psi,
\end{gather*}
where\footnote{We use the convention that the minimum over an empty feasible set is $\infty$.}
$$ d^\star(\Psi) := \min\{ d(S) : S \in \mathcal F, w(S) \geq \Psi \}. $$

We show that this problem admits a fully polynomial-time algorithm by dynamic programming. In particular, we round the item weights so that the total rounded weight takes only polynomially many values, while losing at most a $\varphi$-fraction of the target weight $\Psi$. We then use a standard knapsack dynamic program that finds a set of minimum demand for each rounded weight.

\begin{lemma} \label{lem:threshold}
    Given $\varphi \in \QQ \cap (0, 1)$ and $\Psi \in \QQ_{> 0}$, the \textsc{Threshold Knapsack} problem can be solved in time polynomial in $1/\varphi$ and the encoding length of $(N, d, w, B, \varphi, \Psi)$.
\end{lemma}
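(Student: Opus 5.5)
We solve the \textsc{Threshold Knapsack} problem by weight rounding followed by a min-demand dynamic program. Let $n := |N|$ and set the scaling unit $\mu := \varphi\Psi/n$. Items with $w(e) > \Psi$ need no special care, but to keep the DP small we first cap weights: set $\bar w(e) := \min\{ w(e), \Psi \}$. Any set $S$ with $w(S) \geq \Psi$ either has $\bar w(S) \geq \Psi$, or contains an item with $w(e) > \Psi$; in the latter case $\bar w(S) \geq \Psi$ as well. Conversely $\bar w \leq w$. Hence we may work with $\bar w$ throughout. Define rounded weights $\tilde w(e) := \lfloor \bar w(e)/\mu \rfloor \in \{0, \ldots, \lceil n/\varphi \rceil\}$. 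Then for every $S$,
$$ \mu \tilde w(S) \leq \bar w(S) \leq w(S), \qquad \mu \tilde w(S) \geq \bar w(S) - |S|\mu \geq \bar w(S) - \varphi\Psi. $$
Total rounded weight of any set is at most $n \lceil n/\varphi\rceil$, which is polynomial in $n$ and $1/\varphi$.

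Run the standard DP: for $i = 0, \ldots, n$ and each integer $t \in \{0, \ldots, n\lceil n/\varphi \rceil\}$, compute $D_i(t)$, the minimum demand of a subset of the first $i$ items with rounded weight exactly $t$, with a witness set. Let $t^\ast := \lceil (1-\varphi)\Psi/\mu \rceil$. Among all $t \geq t^\ast$, choose the one minimizing $D_n(t)$, and let $S$ be its witness. If $D_n(t) > B$ for all such $t$, or no such witness exists, declare that no $S \in \mathcal F$ has $w(S) \geq \Psi$. If $S$ is empty, which can only happen when $t^\ast = 0$, replace it by a single item of minimum demand. This item lies in $\mathcal F$ after preprocessing. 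Its weight is at least $1 \geq (1-\varphi)\Psi$ whenever $t^\ast = 0$, since $t^\ast = 0$ forces $(1-\varphi)\Psi \leq 0$, which is impossible. So in fact this case does not arise.

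Correctness rests on two facts.

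\textbf{Soundness.} The returned $S$ has $w(S) \geq \mu \tilde w(S) \geq \mu t^\ast \geq (1-\varphi)\Psi$. It also satisfies $d(S) \leq B$ and $S \neq \emptyset$, so $S \in \mathcal F$.

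\textbf{Optimality of demand.} Let $S^\star$ attain $d^\star(\Psi)$. Then $\mu\tilde w(S^\star) \geq \bar w(S^\star) - \varphi\Psi \geq (1-\varphi)\Psi$, so $\tilde w(S^\star) \geq t^\ast$ because $\tilde w(S^\star)$ is an integer. The DP therefore considers $t = \tilde w(S^\star)$, giving $d(S) \leq D_n(\tilde w(S^\star)) \leq d(S^\star) = d^\star(\Psi) \leq B$. In particular, if we certify infeasibility then no such $S^\star$ exists.

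The main obstacle is ensuring the rounding loses at most $\varphi\Psi$ additively rather than multiplicatively per item. Capping at $\Psi$ together with the unit $\varphi\Psi/n$ handles this, since the loss is at most $|S|\mu \leq \varphi\Psi$. The DP table has $O(n^2/\varphi)$ entries per layer, and demands are stored as integers bounded by $d(N)$, so the running time is polynomial in $1/\varphi$ and the encoding length.
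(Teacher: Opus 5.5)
Your proof is correct and follows essentially the same route as the paper: cap weights at $\Psi$, round with unit $\varphi\Psi/|N|$, and run a min-demand DP over rounded weights. Your threshold $t^\ast = \lceil |N|(1-\varphi)/\varphi\rceil$ is exactly the paper's $W$; the only difference is that the paper truncates the DP index at $W$ (tracking $\min\{\hat w(S), W\}$) instead of enumerating all exact rounded weights and minimizing over $t \ge t^\ast$, which gives a smaller table but the same argument.
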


\begin{proof}
    Let $m := |N|$. Let $M := (\varphi \Psi)/m$. For each $e \in N$, we define $\hat w(e) := \lfloor \min\{ w(e), \Psi \}/M \rfloor$. Let $W := \lceil m(1-\varphi)/\varphi \rceil$. Suppose that $N = \{ e_1, \ldots, e_m \}$. For each $i \in \{ 0, \ldots, m \}$ and for each $t \in \{ 0, \ldots, W \}$, we define
    \begin{equation} \label{eq:dp}
        \DP(i, t) := \min\{ d(S) : S \subseteq \{ e_1, \ldots, e_i \}, \min\{ \hat w(S), W \} = t \}.
    \end{equation}
    We show that all entries defined in \eqref{eq:dp} for all $i \in \{ 0, \ldots, m \}$ and $t \in \{ 0, \ldots, W \}$, and hence the \textsc{Threshold Knapsack} problem, can be computed in time polynomial in $1/\varphi$ and the encoding length by the dynamic program in \Cref{alg:dp}.

    \begin{algorithm}[ht]
        \caption{A dynamic programming algorithm for \textsc{Threshold Knapsack}.}
        \label{alg:dp}
        \KwIn{a finite nonempty ground set $N$, $d : N \to \NN$, $w : N \to \NN$, $B \in \ZZ_+$, $\varphi \in \QQ \cap (0, 1)$ and $\Psi \in \QQ_{> 0}$ such that $d(e) \leq B$ for all $e \in N$.}
        \KwOut{either certify that there is no $S \in \mathcal F$ with $w(S) \geq \Psi$, or return some $S \in \mathcal F$ with $d(S) \leq d^\star(\Psi)$ and $w(S) \geq (1 - \varphi) \Psi$.}
        $m := |N|$, $M := (\varphi \Psi)/m$, $W := \lceil m(1 - \varphi)/\varphi \rceil$, $\hat w(e) := \lfloor \min\{ w(e), \Psi \}/M \rfloor$ for all $e \in N$. \\
        $\DP(0, 0) := 0$, $\DP(0, t) := \infty$ for all $t \in [W]$. \\
        \ForEach{$i := 1, \ldots, m$}{
            $\DP(i, t) := \DP(i - 1, t)$ for all $t \in \{ 0, \ldots, W \}$. \\
            \ForEach{$t := 0, \ldots, W$}{
                $t' := \min\{ W, t + \hat w(e_i) \}$. \\
                $\DP(i, t') := \min\{ \DP(i, t'), \DP(i - 1, t) + d(e_i) \}$.
            }
        }
        \uIf{$\DP(m, W) > B$}{
            \Return{there exists no $S \in \mathcal F$ with $w(S) \geq \Psi$}
        }
        \Else{
            Let $S \subseteq N$ be a subset attaining $\DP(m, W)$ by following the predecessors. \\
            \Return{$S$}
        }
    \end{algorithm}

    \Cref{alg:dp} performs $O(m(W + 1)) = O(|N|^2/\varphi)$ arithmetic operations, and all table entries and indices have polynomial encoding length. Hence, \Cref{alg:dp} runs in time polynomial in $1/\varphi$ and the encoding length of $(N, d, w, B, \varphi, \Psi)$.
    
    Suppose that some $S \in \mathcal F$ satisfies $w(S) \geq \Psi$. Then $d(S) \leq B$, and $\sum_{e \in S} \min\{ w(e), \Psi \} \geq \Psi$. Hence,
    \begin{align*}
        \hat w(S) &= \sum_{e \in S} \left\lfloor \frac{\min\{ w(e), \Psi \}}{M} \right\rfloor \geq \sum_{e \in S} \left( \frac{\min\{ w(e), \Psi \}}{M} - 1 \right) = \frac{\sum_{e \in S} \min\{ w(e), \Psi \}}{M} - |S| \\
        &\geq \frac{m\Psi}{\varphi \Psi} - |S| \geq \frac{m}{\varphi} - m = \frac{m(1 - \varphi)}{\varphi}.
    \end{align*}
    Since $\hat w(S) \in \ZZ_+$, we have $\hat w(S) \geq \lceil m(1 - \varphi)/\varphi \rceil = W$. Thus, $\DP(m, W) > B$ implies that there is no $S \in \mathcal F$ such that $w(S) \geq \Psi$. In particular, if $S^\star \in \mathcal F$ satisfies $d(S^\star) = d^\star(\Psi)$ and $w(S^\star) \geq \Psi$, then $\DP(m, W) \leq d(S^\star) = d^\star(\Psi)$. Hence, $\DP(m, W) \leq d^\star(\Psi)$ in all cases, since the inequality is automatic when $d^\star(\Psi) = \infty$.
    
    On the other hand, if $\DP(m, W) \leq B$ and if $S \subseteq N$ is returned by \Cref{alg:dp}, then $d(S) = \DP(m, W)$ and $\min\{ \hat w(S), W \} = W$, which implies that
    \begin{align*}
        w(S) &\geq \sum_{e \in S} \min\{ w(e), \Psi \} = M\sum_{e \in S} \frac{\min\{ w(e), \Psi \}}{M} \geq M \hat w(S) \geq MW \\
        &= \frac{\varphi \Psi}{m} \cdot \left\lceil \frac{m(1 - \varphi)}{\varphi} \right\rceil \geq \frac{\varphi \Psi}{m} \cdot \frac{m(1 - \varphi)}{\varphi} = (1 - \varphi) \Psi.
    \end{align*}
    Since $\DP(m, W) \leq B$ and since $\DP(m, W) \leq d^\star(\Psi)$, we have $d(S) \leq B$ and $d(S) \leq d^\star(\Psi)$. Since $w(S) \geq (1 - \varphi) \Psi > 0$, we have $S \neq \emptyset$ and hence $S \in \mathcal F$. This completes the proof.
\end{proof}

\Cref{lem:threshold}, together with the characterization of \citet{herzel2021one}, proves \Cref{lem:pareto}.

%% file: fptas.tex
\section{An FPTAS for Multiple-Choice Knapsack Intersection LP} \label{apx:fptas}

In this appendix, we prove \Cref{lem:fptas}. Let $G = (V, E)$ be a (multi)graph. Let $b : V \to \ZZ_+$. Let $d : E \to \NN$. Let $w : E \to \ZZ_+$. Without loss of generality, we assume that $G$ does not contain self-loops. Recall that, for each $u, v \in V$, $F_{u, v} \subseteq E$ is the subset of edges with endpoints $u$ and $v$, and that $\Pi_v := \{ F_{u, v} : u \in N(v) \}$ is a partition of $\delta(v)$ for each $v \in V$. Moreover, recall that, for each $v \in V$,
$$ \mathcal K_{\mathsf{mc}}(v) := \left\{ S \subseteq E : d(S \cap \delta(v)) \leq b(v), |S \cap \pi| \leq 1 \;\forall \pi \in \Pi_v \right\}. $$
Then \eqref{eq:mc-knapsack-lp} is equivalent to the following LP:
\begin{alignat}{4}
    \maximize \qquad && \sum_{e \in E} x_e w(e) \tag{Primal-Config-LP} \label{eq:config-lp} \\
    \subjectto \qquad && \sum_{S \in \mathcal K_{\mathsf{mc}}(v)} \lambda_{v, S} &= 1 && \qquad \forall v \in V, \notag \\
    && \sum_{\substack{S \in \mathcal K_{\mathsf{mc}}(v) \\ e \in S}} \lambda_{v, S} &= x_e && \qquad \forall v \in V, e \in \delta(v), \notag \\
    && x_e &\geq 0 && \qquad \forall e \in E, \notag \\
    && \lambda_{v, S} &\geq 0 && \qquad \forall v \in V, S \in \mathcal K_{\mathsf{mc}}(v). \notag
\end{alignat}
The dual LP of \eqref{eq:config-lp} is the following LP:
\begin{alignat}{4}
    \minimize \qquad && \sum_{v \in V} \mu_v \tag{Dual-Config-LP} \label{eq:dual-lp} \\
    \subjectto \qquad && \nu_{u, e} + \nu_{v, e} &\geq w(e) && \qquad \forall e = uv \in E, \notag \\
    && \mu_v &\geq \sum_{e \in S \cap \delta(v)} \nu_{v, e} && \qquad \forall v \in V, S \in \mathcal K_{\mathsf{mc}}(v), \notag \\
    && \mu_v &\in \RR && \qquad \forall v \in V, \notag \\
    && \nu_{v, e} &\in \RR && \qquad \forall v \in V, e \in \delta(v). \notag
\end{alignat}
We use $\LPOPT_{\mathsf{P}}$ and $\LPOPT_{\mathsf{D}}$ to denote the optimum of \eqref{eq:config-lp} and the optimum of \eqref{eq:dual-lp}, respectively. Let $(\bar x, \bar \lambda)$ be defined by $\bar x_e := 0$ for each $e \in E$, and $\bar \lambda_{v, S} := 1$ if $S = \emptyset$ and $\bar \lambda_{v, S} := 0$ otherwise. Let $(\bar \mu, \bar \nu)$ be defined by $\bar \nu_{u, e} := w(e)/2$ and $\bar \nu_{v, e} := w(e)/2$ for each $e = uv \in E$, and $\bar \mu_v := \sum_{e \in \delta(v)} w(e)/2$ for each $v \in V$. It is not hard to see that $(\bar \mu, \bar \nu)$ is a feasible solution of \eqref{eq:dual-lp}, and that $(\bar x, \bar \lambda)$ is a feasible solution of \eqref{eq:config-lp}. Hence, strong LP duality holds, i.e., $\LPOPT_{\mathsf{P}} = \LPOPT_{\mathsf{D}}$.

Given $\beta > 0$, we say that a \emph{$\beta$-approximate separation algorithm} for \eqref{eq:dual-lp} is an algorithm that, given a point $(\mu, \nu)$, either returns a violated constraint of \eqref{eq:dual-lp}, or guarantees that $(\mu/\beta, \nu)$ is a feasible solution of \eqref{eq:dual-lp}. We have the following lemma, which is analogous to \cite[Lemma 2.2]{FleischerLisa2011TAAf}.

\begin{lemma} \label{lem:approx}
    Suppose that there is a $\beta$-approximate separation algorithm $\mathcal A$ for \eqref{eq:dual-lp} for some $\beta \in (0, 1]$. Then for all $\kappa \in (0, \beta)$, one can compute, in time polynomial in $1/\kappa$, the encoding length and the running time of $\mathcal A$, a feasible solution $(x, \lambda)$ of \eqref{eq:config-lp} with polynomial-size support such that
    \begin{equation} \label{eq:approx}
        \sum_{e \in E} x_e w(e) \geq (\beta - \kappa) \cdot \LPOPT_{\mathsf{P}}.
    \end{equation}
\end{lemma}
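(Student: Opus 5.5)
We plan to follow the standard multiplicative-weights/ellipsoid-free scheme of \citet{FleischerLisa2011TAAf}, adapted to \eqref{eq:dual-lp}. The idea is to run the ellipsoid method on a parametrized feasibility version of the dual, using $\mathcal A$ as the separation oracle. The constraints returned by $\mathcal A$ then identify polynomially many primal columns, and solving the primal restricted to those columns gives the bound \eqref{eq:approx}.

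Concretely, we first fix a target value $\tau$ and consider the feasibility problem
\[
\textstyle\sum_v \mu_v \le \tau
\]
together with the constraints of \eqref{eq:dual-lp}. We run the ellipsoid method on this system. Whenever the current point violates $\sum_v \mu_v \le \tau$, or one of the (polynomially many) edge constraints $\nu_{u,e}+\nu_{v,e}\ge w(e)$, we separate directly. Otherwise we call $\mathcal A$: it either returns a violated configuration constraint $(v,S)$, which we record, or certifies that $(\mu/\beta,\nu)$ is dual feasible, which yields $\LPOPT_{\mathsf D}\le \tau/\beta$.

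Next we binary search over $\tau$, with precision governed by $\kappa$ and using polynomially bounded encoding lengths. This gives a value $\tau^\star$ with two properties. First, at $\tau^\star$ the algorithm returns a certificate, so $\LPOPT_{\mathsf P}=\LPOPT_{\mathsf D}\le \tau^\star/\beta$. Second, at a slightly smaller value $\tau^\star(1-\kappa')$, the ellipsoid run ends by declaring infeasibility after polynomially many iterations, having returned only polynomially many configuration constraints, collected in a family $\mathcal C$.

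The ellipsoid run proves that the dual restricted to the constraints in $\mathcal C$ (plus the edge constraints) has optimum at least $\tau^\star(1-\kappa')$. Some care is needed to make the volume argument rigorous: one can either add a small slack to the objective or bound the variables in a box, which is possible since $\bar\mu,\bar\nu$ give explicit bounds. By LP duality applied to this restricted dual, the primal \eqref{eq:config-lp} restricted to the variables $\lambda_{v,S}$ with $(v,S)\in\mathcal C$ has optimum at least $\tau^\star(1-\kappa')$. We include the empty sets so that the restricted primal stays feasible. This restricted primal has polynomial size, so we solve it exactly and obtain $(x,\lambda)$ with polynomial support and
\[
\textstyle\sum_e x_e w(e)\ge (1-\kappa')\tau^\star\ge (1-\kappa')\beta\,\LPOPT_{\mathsf P}.
\]
Choosing $\kappa'=\kappa/\beta$ gives \eqref{eq:approx}.

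The main obstacle is the technical justification of the ellipsoid step with an \emph{approximate} oracle, since the oracle does not separate the true feasible region. As in Fleischer et al., we plan to sidestep this by never relying on $\mathcal A$'s certificate inside an infeasibility argument. The infeasible run only uses genuinely violated constraints, so its conclusion is exact for the restricted LP. A secondary issue is controlling the binary search range and the ellipsoid volume bounds; for this we use $0\le \LPOPT\le \sum_e w(e)$ and the polynomial encoding length of vertices of the restricted polyhedra.
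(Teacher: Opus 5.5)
Your proposal is correct and follows essentially the same route as the paper: run the ellipsoid method with $\mathcal A$ on the parametrized dual feasibility system, binary search for the threshold, and solve \eqref{eq:config-lp} restricted to the columns $\mathcal C$ collected during the last infeasible run. Your multiplicative precision $\kappa' = \kappa/\beta$ matches the paper's choice $\tau \le \kappa T^\star/\beta$, and your extra care fills in details the paper leaves implicit: you add the empty-set columns so the restricted primal is feasible, and you bound the variables for the volume argument. One wording issue: calling the scheme ``ellipsoid-free'' in your first sentence is a misnomer, since you, like the paper, rely on the ellipsoid method.
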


Assuming \Cref{lem:approx}, we are ready to prove \Cref{lem:fptas}.

\begin{proof}[Proof of \Cref{lem:fptas}]
    Let $\varepsilon \in (0, 1)$. Let $\beta := 1 - \varepsilon/2$. First, we give a $\beta$-approximate separation algorithm for \eqref{eq:dual-lp}. Separating over constraints $\nu_{u, e} + \nu_{v, e} \geq w(e)$ for all $e = uv \in E$ can be done in polynomial time. If any of these constraints is violated, return it. Fix $v \in V$. Let
    $$ H_v := \max\left\{ \sum_{e \in S \cap \delta(v)} \nu_{v, e} : S \in \mathcal K_{\mathsf{mc}}(v) \right\}. $$
    Since $\mathcal K_{\mathsf{mc}}(v)$ is downward-closed, items $e \in \delta(v)$ with $\nu_{v, e} \leq 0$ may be discarded without changing the optimum. Applying an FPTAS for the \textsc{$0$-$1$ Multiple-Choice Knapsack} problem \citep[e.g.,][]{lawler1979fast,bansal2004improved} yields a set $S_v \in \mathcal K_{\mathsf{mc}}(v)$ such that $S_v \subseteq \delta(v)$ and that
    $$ \sum_{e \in S_v} \nu_{v, e} \geq \beta H_v. $$
    If $\sum_{e \in S_v} \nu_{v, e} > \mu_v$, return the violated constraint indexed by $(v, S_v)$. Otherwise,
    $$ \frac{\mu_v}{\beta} \geq \frac{1}{\beta} \sum_{e \in S_v} \nu_{v, e} \geq \frac{1}{\beta} \cdot \beta H_v = H_v. $$
    Hence, $(\mu/\beta, \nu)$ is feasible in \eqref{eq:dual-lp}. This gives a $\beta$-approximate separation algorithm for \eqref{eq:dual-lp}.
    
    By \Cref{lem:approx} with $\kappa := \varepsilon/2$, one can compute, in time polynomial in $1/\varepsilon$ and the input length, a feasible solution $(x, \lambda)$ of \eqref{eq:config-lp} with
    $$ \sum_{e \in E} x_e w(e) \geq (\beta - \kappa) \cdot \LPOPT_{\mathsf{P}} = \left(1 - \frac{\varepsilon}{2} - \frac{\varepsilon}{2}\right) \cdot \LPOPT_{\mathsf{P}} = (1 - \varepsilon) \cdot \LPOPT_{\mathsf{P}}. $$
    Since \eqref{eq:mc-knapsack-lp} is equivalent to \eqref{eq:config-lp}, this completes the proof.
\end{proof}

We now prove \Cref{lem:approx}. The proof is almost identical to that of Lemma 2.2 in \cite{FleischerLisa2011TAAf}, and this fact was also observed by \citet{carr2002randomized} and by \citet{644108.644154} for a class of packing-covering linear programs. We give its proof for completeness.

\begin{proof}[Proof of \Cref{lem:approx}]
    We run the ellipsoid algorithm on \eqref{eq:dual-lp} using $\mathcal A$. Specifically, given $T \geq 0$, we consider the following feasibility system $D(T)$:
    \begin{align*}
        \nu_{u,e}+\nu_{v,e}
            &\ge w(e)
            && \forall e=uv\in E, \\
        \mu_v
            &\ge \sum_{e\in S \cap \delta(v)}\nu_{v,e}
            && \forall v\in V,\ 
               S\in\mathcal K_{\mathsf{mc}}(v), \\
        \sum_{v\in V}\mu_v
            &\le T.
    \end{align*}
    Given $T \geq 0$, we use the ellipsoid algorithm to determine if $D(T)$ is feasible and use binary search to find the smallest feasible value $T$.

    Suppose the binary search algorithm terminates with a solution $(\mu^\star, \nu^\star)$ and $T^\star = \sum_{v \in V} \mu_v^\star$. Then $D(T^\star - \tau)$ is infeasible for some $\tau > 0$ depending on the precision of the binary search. Hence, the optimum of \eqref{eq:dual-lp} is at least $T^\star - \tau$. By the definition of a $\beta$-approximate separation algorithm, we have that $(\mu^\star/\beta, \nu^\star)$ is feasible. Hence, the optimum of \eqref{eq:dual-lp} is at least $T^\star - \tau$ and at most $T^\star/\beta$.

    In the execution of the ellipsoid algorithm for $T^\star - \tau$, we have checked a polynomial number of constraints. This set $\mathcal C \subseteq \{ (v, S) : v \in V, S \in \mathcal K_{\mathsf{mc}}(v) \}$ of constraints is sufficient to show that the optimum of \eqref{eq:dual-lp} is at least $T^\star - \tau$. Consider the following LP:
    \begin{alignat}{4}
        \minimize \qquad && \sum_{v \in V} \mu_v \tag{$\mathcal C$-Dual-Config-LP} \label{eq:dual-lp-restricted} \\
        \subjectto \qquad && \nu_{u, e} + \nu_{v, e} &\geq w(e) && \qquad \forall e = uv \in E, \notag \\
        && \mu_v &\geq \sum_{e \in S \cap \delta(v)} \nu_{v, e} && \qquad \forall (v, S) \in \mathcal C, \notag \\
        && \mu_v &\in \RR && \qquad \forall v \in V, \notag \\
        && \nu_{v, e} &\in \RR && \qquad \forall v \in V, e \in \delta(v). \notag
    \end{alignat}
    The dual LP of \eqref{eq:dual-lp-restricted} is the following LP obtained by restricting \eqref{eq:config-lp} to the $x$-variables and the $\lambda$-variables corresponding to constraints in $\mathcal C$, which has polynomially many variables and constraints:
    \begin{alignat}{4}
        \maximize \qquad && \sum_{e \in E} x_e w(e) \tag{$\mathcal C$-Primal-Config-LP} \label{eq:config-lp-restricted} \\
        \subjectto \qquad && \sum_{S \in \mathcal K_{\mathsf{mc}}(v)} \lambda_{v, S} &= 1 && \qquad \forall v \in V, \notag \\
        && \sum_{\substack{S \in \mathcal K_{\mathsf{mc}}(v) \\ e \in S}} \lambda_{v, S} &= x_e && \qquad \forall v \in V, e \in \delta(v), \notag \\
        && x_e &\geq 0 && \qquad \forall e \in E, \notag \\
        && \lambda_{v, S} &\geq 0 && \qquad \forall (v, S) \in \mathcal C. \notag
    \end{alignat}
    By strong LP duality, the optimum of \eqref{eq:config-lp-restricted} is at least $T^\star - \tau$. By choosing $\tau \leq \kappa T^\star/\beta$,
    $$ T^\star - \tau \geq T^\star - \frac{\kappa T^\star}{\beta} = (\beta - \kappa) \cdot \frac{T^\star}{\beta} \geq (\beta - \kappa) \cdot \LPOPT_{\mathsf{D}} = (\beta - \kappa) \cdot \LPOPT_{\mathsf{P}}. $$
    We solve \eqref{eq:config-lp-restricted} in polynomial time and extend its solution by setting $\lambda_{v, S} := 0$ for all $(v, S) \not \in \mathcal C$, thereby obtaining a feasible solution of \eqref{eq:config-lp} with the same objective value. This completes the proof.
\end{proof}